\def\be{\begin{equation}}
\def\ee{\end{equation}}
\def\ba{\begin{array}}
\def\ea{\end{array}}
\def\dps{\displaystyle}
\def\bpsi{\bar{\psi}}
\newcommand{\half}{\frac{1}{2}}
\def\sm{s_1-s_2-1}
\renewcommand{\tilde}{\widetilde}
\renewcommand{\hat}{\widehat}
\newtheorem{prop}{Proposition}[section]
\newcommand{\bref}[1]{\textbf{\ref{#1}}}
\def\ads{AdS_{5}}
\def\be{\begin{equation}}
\def\ee{\end{equation}}
\def\ba{\begin{array}}
\def\ea{\end{array}}
\def\d{\partial}
\def\dps{\displaystyle}
\def\ba{\begin{array}}
\def\ea{\end{array}}
\def\d{\partial}
\def\dps{\displaystyle}
\def\bpsi{\bar{\psi}}
\newcommand{\dd}{\partial}
\renewcommand{\d}{\partial}
\renewcommand{\geq}{\,{\geqslant}\,}
\renewcommand{\leq}{\,{\leqslant}\,}
\newcommand{\binner}[2]{%
  {\langle}\kern-4.15pt{\langle}#1{,}\,#2{\rangle}\kern-4.15pt{\rangle}}
\newcommand{\ffrac}[2]{\raisebox{.5pt}%
  {\footnotesize$\displaystyle\frac{#1}{#2}$}\kern1pt}
\newcommand{\dl}[1]{\mathchoice{\ffrac{\dd}{\dd #1}}{\frac{\dd}{\dd
      #1}}{\ffrac{\dd}{\dd #1}}{\ffrac{\dd}{\dd #1}}}
\def\cA{\mathcal{A}}
\def\cB{\mathcal{B}}
\def\cD{\mathcal{D}}
\def\cE{\mathcal{E}}
\def\cF{\mathcal{F}}
\def\cM{\mathcal{M}}
\def\cN{\mathcal{N}}
\def\cO{\mathcal{O}}
\def\cP{\mathcal{P}}
\def\cR{\mathcal{R}}
\numberwithin{equation}{section} \makeatletter
\begin{document}

\begin{flushright}
FIAN-TD-2010-09 \\
\end{flushright}

\vspace{1cm}

\begin{center}

{\Large\textbf{
FV-type action for $AdS_5$  mixed-symmetry fields}}

\vspace{.9cm}

{\large Konstantin Alkalaev}

\vspace{0.5cm}

\textit{I.E. Tamm Department of Theoretical Physics, \\P.N. Lebedev Physical
Institute,\\ Leninsky ave. 53, 119991 Moscow, Russia}

\vspace{0.5cm}

\begin{abstract}
We formulate Fradkin-Vasiliev type theory  of massless higher spin
fields in $AdS_5$. The corresponding  action functional  describes
cubic order  approximation to gravitational interactions of
bosonic mixed-symmetry fields of a particular "hook" symmetry type
and totally symmetric bosonic and fermionic fields.

\end{abstract}

\end{center}


\section{Introduction}

Interacting theories with spectra including graviton along with particles of spin grater than two  provide
a fascinating playground for exploring the gravity both on classical and quantum
levels. For example,  string theory describes
a dynamics of an infinite collection
of massive fields with growing masses and spins and a finite set of massless lower spin fields.
An important  feature of  higher spin models is  infinite
symmetries which are believed to improve conventional quantum inconsistency
of Einstein gravity.
Higher spin theories with massless spectra play a distinguished role because they
can be considered as an unbroken phase for  massive higher spin theories including
 string theory itself \cite{Vasiliev:1999ba,Gross:1988ue}
(see also, \textit{e.g.},
\cite{Sundborg:2000wp,Klebanov:2002ja,Sezgin:2002rt,Bonelli:2003kh,Bianchi:2004xi,Giombi:2009wh,
Henneaux:2010xg,Gaberdiel:2010pz,Koch:2010cy,Douglas:2010rc}
for a discussion in the AdS/CFT correspondence context).

The problem of constructing a consistent theory of
interactions between higher spin massless fields and the
gravity has been first  attacked  by Aragone and Deser
\cite{Aragone:1979hx}. According to them massless fields of  spin $s>2$ do not
minimally interact with the gravity and therefore no higher spin extension of
supergravity theories is possible (see
\cite{Sorokin:2004ie, Bekaert:2010hw} for a review). The solution
has been proposed by Fradkin and Vasiliev in
\cite{Fradkin:1987ks,Fradkin:1986qy} who formulated guiding
principles to construct a consistent interacting theory of higher
spin fields. They identified  anti-de Sitter background geometry as a natural background for
gravitational higher spin interactions and explicitly constructed
higher spin gauge symmetry algebra \cite{Fradkin:1986ka}. It turns out that
the presence  of additional dimensionful parameter -- the cosmological constant
$\lambda$ of anti-de Sitter spacetime -- enables  one to build
various higher derivative interaction terms in the action with
overall coefficients proportional to the inverse of $\lambda$,
and this is quite similar to string theory vertices of
massive higher spin fields.\footnote{In particular, it implies
that the straightforward $\lambda  \rightarrow 0 $ limit is
ill-defined thereby conforming the no-go theorem of
\cite{Aragone:1979hx}. However there exists a tricky limiting
procedure that allows one to build some non-minimal couplings of
higher spin fields with the gravity \cite{Boulanger:2008tg}.
See also recent papers \cite{Zinoviev:2008ck} which consider
some particular vertices of spin-$3$ massless field with the gravity. Moreover, using the analogy
between massless fields in AdS spacetime and massive fields in Minkowski space
these results are extended to interacting massive spin-$3$ fields in Minkowski spacetime \cite{Zinoviev:2008ck}. }
Let us mention that a wide class of higher spin cubic (self-)interaction vertices is known
in Minkowski space but they do not however contain minimal couplings
with the gravity
\cite{Berends:1984wp,Bengtsson:1986kh,Deser:1990bk,Fradkin:1991iy,Bekaert:2005jf,
Boulanger:2005br,Metsaev:2005ar,Manvelyan:2010jr}.

More recently the original FV theory has been extended from $d=4$
to $d=5$  for both $\cN=0$ pure bosonic  and
$\cN=1$ supersymmetric cubic interactions of totally symmetric
(Fronsdal) fields \cite{Vasiliev:2001wa, Alkalaev:2002rq}. The
$5d$ theory inherits all basic features of the $4d$ theory and is
governed by the higher spin symmetry superalgebra identified by Fradkin
and Linetsky in the context of the $4d$ higher spin conformal
theory \cite{Fradkin:1989yd,Fradkin:1989md}.\footnote{This algebra was
also identified as an algebra of global $\ads$ HS symmetries within $5d$ unfolded formulation proposed in
\cite{Sezgin:2001zs} .  More general class of
conformal higher spin algebras has been  described  in
\cite{Vasiliev:2001zy}. } The novel feature as compared to $4d$ FV
theory is an infinite degeneracy of the spectrum of  excitations:
a field of each spin enters in an infinitely many copies. In this
respect the spectrum of $5d$ FV-type theory  resembles
that of string theory where massive excitations of a given
spin appear on different mass levels growing up to infinity.

Going to higher dimensions  one encounters a new phenomenon
though: there are more than one spin number in $d>4$  so fields of
mixed-symmetry type described by $o(d-1)$ Young diagrams appear.
Mixed-symmetry $AdS_d$  fields  may interact to each other and
with totally symmetric fields including gravity so it will be
interesting to study their interactions. In particular, a FV-type
theory for mixed-symmetry fields is still unknown.\footnote{ The
cubic interaction vertices of mixed-symmetry fields in Minkowski
spacetime were analyzed within the light-cone formalism in
\cite{Fradkin:1995xy}. Inspired by string field theory some
covariant vertices for mixed-symmetry fields in Minkowski space
were constructed in \cite{Fotopoulos:2007nm}.}
In this paper we partially fulfill this gap and explicitly
construct cubic order interacting theory in $\ads$ that includes
mixed-symmetry field vertices.

We build $\cN=2$  FV-type theory thereby extending  $\cN=0$ and $\cN=1$ results obtained
previously \cite{Vasiliev:2001wa,Alkalaev:2002rq}. The higher spin algebra
that governs consistent interactions in our model is $\cN=2$ Fradkin-Linetsky
superalgebra \cite{Fradkin:1989yd,Fradkin:1989md}. It contains
$\cN=2$ extended $su(2,2|2)$ superalgebra  as a maximal finite-dimensional subalgebra so fields
of the theory are organized in $su(2,2|2)$ supermultiplets. Obviously, $AdS_5$ symmetry algebra $su(2,2)$
and $R$-symmetry algebra $u(2)$ are bosonic subalgebras of  $su(2,2|2)$ superalgebra. Contrary
to spectra of  $\cN=0,1$ theories the $\cN=2$ supermultiplet
contains not only totally symmetric fields but also the so-called
"hook" fields. The "hooks" are mixed-symmetry fields with particular
symmetry type differing from totally symmetric fields by
additional row of a single cell in the respective Young diagram. Denoting spins of
$\ads$ massless gauge fields by a pair of (half-)integer numbers $(s_1,s_2)$ we give the content
of $\cN=2$ spin-$s$ supermultiplet
\be
\label{1.1}
\{s \} = (s, 0)_{[1]}\oplus (s-\half)_{[2]}\oplus (s-1)_{[4]}\oplus (s-1,1)_{[1]}\oplus(s-\frac{3}{2})_{[2]}
\oplus (s-2)_{[1]}\;\;,
\ee
where $s$ is a highest spin, while labels in square brackets denote dimensions of
$u(2)$ algebra representations. Each spin-$s$ supermultiplet possesses equal number of $16s-8$
bosonic and fermionic degrees of freedom.

Generally, $AdS_5$ higher spin models based on Fradkin-Linetsky superalgebra
describe an infinite collection of  supermultiplets \eqref{1.1} with
a highest spin $s$ running up to infinity
\be
\sum_{k=0}^L \sum_{s=2}^\infty \;\;\{s\}^{(\,k\,)}
\ee
while  $k$ parameterizes a $k$-th copy of a spin-$s$ supermultiplet. The models considered
in this paper corresponds to $L=\infty$ (unreduced model) or $L=0$ (reduced model).

According to \eqref{1.1} the spectrum of massless excitations in a full $\cN=2$ supersymmetric theory includes lower spins
$0, \half, 1$  contained  in the spin-$2$ (graviton) and spin-$3$ (hypergraviton)
supermultiplets. However, we eliminate all these lower spin fields so that the resulting  theory
is not supersymmetric in a strong sense, \textit{i.e.} it is not globally supersymmetric.
It is legitimate  because in the cubic approximation
one can set to zero a coupling of any three fields keeping the gauge invariance
of the theory intact.\footnote{Indeed, a spin $s_1$-$s_2$-$s_3$ cubic coupling can be represented as
$g\, \Phi^{(s_1)}_{a_1 ... a_{s_1}} J^{a_1 ... a_{s_1}}(\Phi^{(s_2)}, \Phi^{(s_3)})$, where $g$ is a
coupling constant, $\Phi^{(s_i)}$ are spin-$s_i$ fields,
and $J^{a_1 ... a_{s_1}}(\Phi^{(s_2)}, \Phi^{(s_3)})$ are higher spin currents
bilinear in the fields and their derivatives. Gauge invariance of the above
coupling  implies that the currents
are conserved
$\cD_{a_1}J^{a_1a_2 ... a_{s_1}}(\Phi^{(s_2)}, \Phi^{(s_3)})\approx 0$, where $\approx$
means going on-shell while in the cubic order approximation
it is sufficient to use free field equations for $\Phi^{(s_{1,2})}$.
Recall also  that Jacobi identities of the gauge algebra
are proportional to $g^2$.
As a result,
gauge symmetry do not mix  different cubic couplings and one can consistently
switch off any of them. } This allows one to truncate all  vertices with lower spin fields which is
equivalent to eliminating them from the spectrum. It greatly simplifies the whole analysis
because within the FV-type theory the
action functionals for lower spin and higher spin fields are formulated in different terms
thereby leading to  some technical complications
(see, however, \cite{Fradkin:1986qy,Bekaert:2009ud}).

The paper is organized as follows. In Section \bref{sec: free} we
extensively discuss  the unfolded formulation of higher spin
dynamics in the $\ads$ background geometry in spinor language. We
consider totally symmetric fields with integer and half-integer
spins and mixed-symmetry fields of the  "hook" symmetry
type. The respective set of unfolded fields is given by
physical, auxiliary and extra fields which play  different dynamical
roles. We build quadratic Lagrangians  and
introduce the set of constraints that express all auxiliary and
extra fields in terms of the physical ones. These constraints play
a crucial role in the analysis of the cubic higher spin
interactions. In Sections \bref{sec:aux} and
\bref{sec:gaugefieldsaux} we introduce  bosonic and fermionic
auxiliary variables that enable us to represent   higher spin fields
as expansion coefficients of polynomials in these  variables.
Introducing auxiliary variables is not just a technical tool that
greatly simplifies the whole consideration but also brings to
light such concepts like Howe duality that allows one  to
formulate group-theoretical properties of higher spin fields in a
simple and  manifest fashion. Section \bref{sec: free} also serves
to set our notations and conventions.

In section \bref{sec: FL superalgebra} we review a construction of
Fradkin-Linetsky superalgebra with any number of supersymmetries
$\cN$ giving particular emphasis to the $\cN=2$ case. In Section
\bref{sec:gauging} we describe a gauging procedure that introduces
local symmetry and provides a link to unfolded gauge fields considered in Section
\bref{sec: free}. In Section \bref{sec:supermult} we explicitly
describe the structure of $\cN=2$ higher spin supermultiplets.

Higher spin theories of FV-type are reviewed in Section \bref{sec:reviewFV}.
We formulate all necessary conditions to be satisfied by
the searched-for action in the cubic approximation.
In Section \bref{sec:summary} we formulate the final answer
and list all coefficients in the action both for unreduced and reduced models.
Section \bref{sec:calculations} contains explicit calculations of the coefficients
in the action. Because the total expression for the gauge transformations contains
over a hundred terms  we split them in groups associated with
different gauge supermultiplet parameters and analyze them separately. In Section
\bref{sec: main_bosonic_sym} we explicitly calculate bosonic
gauge invariance for the "hook" fields and this sets a pattern for calculating
the remaining invariance. In Section \bref{sec:remaining_invariance} we
sketch the main steps of how  calculation of the remaining invariance  develops and give the final result
for the coefficient functions collected in Section \bref{sec:summary}.

In Conclusion \bref{sec:conclusion} we shortly discuss our results and
future perspectives.  In Appendix \bref{sec:A} we collect the explicit expressions for the
gauge transformations omitted in the main text.

\section{Free fields}
\label{sec: free}

The isometries of $\ads$ spacetime form $o(4,2)$ algebra and the spectrum of local
excitations of  relativistic  fields is arranged
in terms of labels of irreducible representations of maximally compact subalgebra
$o(2)\oplus o(4) = o(2)\oplus o(3)\oplus o(3)\subset o(4,2)$. The $o(2)$ quantum number physically means the energy $E_0$
while $o(4)$ quantum numbers are spins $(s_1, s_2)$ associated with two $o(3)$ factors
of $o(4)$ subalgebra. For massless gauge fields
quantum numbers are linearly dependent so one may represent the energy
via spin numbers, $E_0 = E_0(s_1,s_2)$, thereby expressing the fact that massless fields
has less degrees of freedom than massive ones \cite{Metsaev:1995re,Brink:2000ag}.
Let $D(s_1, s_2)$ denote a space of states of an $\ads$ massless gauge field. It is
identified with some highest weight unitary irreducible
(infinite-dimensional) representation of $o(4,2)$ algebra.

A (real) number of local degrees of
freedom propagated by massless fields $\#D(s_1, s_2)$ with $s_1>s_2$
has been first calculated in \cite{Metsaev:2004ee} using the light-cone form of higher
spin dynamics
and the answer is given by
\be
\label{degreeB}
\#D(s_1, s_2)= \left\{
\ba{l}
\dps
2s_1+1\,,\;\;\;\; s_1  = n\;,   \;\; s_2 = 0\,, \quad n\in \mathbb{N}\;,
\\
\dps
4s_1+2\,,\; s_1 = n+1/2,\;\;  s_2 = 1/2\;, \quad n\in \mathbb{N}\;,
\\
4s_1+2 \,,\;s_1  = n\;,   \;\; s_2 = k\,, \quad n,k\in \mathbb{N}\;,
\\
\dps
4s_1+2\,,\;s_1  = n+1/2\;,   \;\; s_2 = k+1/2\,, \quad n,k\in \mathbb{N}\;.

\ea
\right.
\ee

\noindent It is important for our future considerations  that
non-symmetric bosonic field ($s_2\neq 0$) have
a number of on-shell degrees of freedom
twice that of totally symmetric bosonic field ($s_2=0$), while fermionic fields have the same degrees
of freedom irrespective of a second spin value.
Massless fields with equal spins $s_1=s_2$
are the so-called doubletons and have no local degrees of freedom
\cite{Gunaydin:1984wc,Gunaydin:1998sw,Ferrara:1998bv}.

In this paper  we use the unfolded formulation of higher spin
dynamics  and  describe massless  gauge fields as
differential $1$-forms taking  values in some irreducible $o(4,2)$ representations (for review see \cite{Bekaert:2005vh} ).
Moreover, we use the  well-known isomorphism
$$o(4,2)\sim su(2,2)$$
and develop a spinor form of the unfolded dynamics in $\ads$ spacetime
\cite{Sezgin:2001zs,Vasiliev:2001wa,Alkalaev:2001mx,Alkalaev:2006hq}.
Fields of the higher spin models under consideration form particular (super)multiplets of
massless bosonic spin-$(s,0)$ fields, fermionic spin-$(s,\half)$ fields,
and massless  spin-($s,1$) "hook" fields. In what follows we explicitly
describe quadratic Lagrangian formulation for these fields giving particular
emphasis to description of "hook" fields.
We start however from describing
$su(2,2)$ spinor form of the gravity thus setting a pattern for higher spin generalizations.

\subsection{Gauge description of $\ads$ spacetime}

$5d$ gravity with the negative cosmological constant
can be formulated in terms of 1-form connection
taking values in the $su(2,2)$ algebra\footnote{Throughout the paper we work within the mostly minus signature
and use notation $\alpha, \beta = 1,..., 4$ for $su(2,2)$ spinor
indices, $i,j = 1,..., \cN$ for $R$-symmetry $u(\cN)$ indices, $\mu,\nu = 0,
..., 4\;$ for  world indices, $a,b = 0, ..., 4$ for tangent
Lorentz $o(4,1)$ vector indices.
}
\be
\label{grav}
\Omega(x) =
dx^{\mu}\:\Omega_{\mu}{}^\alpha{}_\beta(x)\:T_\alpha{}^\beta\;,
\ee
where $T_\alpha{}^\beta$ are basis elements of $su(2,2)$
algebra\footnote{An explicit  realization
of $su(2,2)$ algebra is discussed in  Sections \bref{sec:aux} and  \bref{sec: FL superalgebra}.}
and the connection is traceless, $\Omega_\mu{}^\alpha{}_\alpha = 0$.
As usual, the connection decomposes  into the frame field
and the Lorentz connection. By virtue  of
the compensator mechanism  for gravity theories  this
splitting can be done in a manifestly $su(2,2)$ covariant fashion \cite{Stelle:1979aj,Preitschopf:1997gi}.
For the case at hand  we introduce the  compensator as an
antisymmetric bispinor field
\be
V^{\alpha\beta}(x) = - V^{\beta\alpha}(x)\;,
\ee
normalized so that $V_{\alpha\gamma}V^{\beta\gamma}=\delta_\alpha{}^\beta$
and $V_{\alpha\beta}=\frac{1}{2}\varepsilon_{\alpha\beta\gamma\rho}V^{\gamma\rho}$.
The compensator  field does not carry local degrees of freedom
because  it is an auxiliary field with the transformation law of Stueckelberg type (see \cite{Vasiliev:2001wa} for
more details). The Lorentz subalgebra in $su(2,2)$ is identified with stability transformations of
the compensator.
It follows that the frame field $E^{\alpha\beta}$ and Lorentz spin connection $\omega^{\alpha}{}_\beta$
are defined as \cite{Vasiliev:2001wa}
\be
\label{decomframelor}
\ba{c}
\dps
E^{\alpha\beta} = DV^{\alpha\beta}\equiv dV^{\alpha\beta}
+\Omega^\alpha{}_\gamma V^{\gamma\beta} +\Omega^\beta{}_\gamma
V^{\alpha\gamma}\;,
\qquad
\dps \omega^\alpha{}_\beta = \Omega^\alpha{}_\beta
+\frac{\lambda}{2}\,E^{\alpha\gamma}V_{\gamma\beta}\;,

\ea
\ee
where $\lambda$ is a cosmological parameter, $\lambda^2>0$,
operator $d = dx^\mu \d_\mu$ is the de Rham differential,  and $D$ is the $su(2,2)$ covariant
derivative. Compensator $V^{\alpha\beta}$ is Lorentz-invariant so it can be treated
as a symplectic metric that allows one to raise  and lower spinor indices in a Lorentz
covariant way as
\be
\label{Vsympl}
X^\alpha=V^{\alpha\beta}X_\beta\;,
\qquad
Y_\alpha=Y^\beta V_{\beta\alpha}\;.
\ee
In particular, it follows that $E^{\alpha\beta} V_{\alpha\beta}=0$ and $\omega^{\alpha\beta}V_{\alpha\beta}=0$
which implies that  the frame and Lorentz connection are irreducible Lorentz tensors.

The 2-form curvature $R_\alpha{}^\beta= \half\,  R_{\mu\nu}{}_\alpha{}^\beta dx^\mu\wedge dx^\nu\,$
associated with the connection (\ref{grav}) is given by
\be
R_\alpha{}^\beta= d\:\Omega_\alpha{}^\beta
+\Omega_\alpha{}^\gamma\wedge\Omega_\gamma{}^\beta\;.
\ee
The zero-curvature equation
\be
\label{zerocurv}
R_\alpha{}^\beta(\Omega_0) =0
\ee
locally  describes  metric of $\ads$ spacetime of radius $\lambda^{-1}$. Indeed, decomposing curvature $R_\alpha{}^\beta$ in Lorentz-covariant
components one finds the torsion tensor along with Riemann tensor extended by cosmological term proportional to $\lambda^2$.
Setting these tensors to zero provides a link with
Einstein gravity (see \cite{Bekaert:2005vh} for more details).
The background gravitational fields
will be denoted as $\Omega_0^\alpha{}_\beta=(h^{\alpha\beta},w^{\alpha\beta})$
while the background $su(2,2)$ derivative will be denoted as $D_0$. From
\eqref{zerocurv} it follows  that $D_0$ is nilpotent,  $D_0^2=0$.

\subsection{Totally symmetric massless fields in $\ads$}
\label{sec: totally symmetric}

The metric-like formulation of higher spin dynamics introduces spin-$s$ massless fields
as totally symmetric Lorentz tensors $\phi_{a_1 ... a_s}(x)$ or spin-tensor
$\psi^{\hat \alpha}{}_{a_1 .... a_{s-1/2}}(x)$, where $\hat \alpha$ is a  spinor index.
These (spin-)tensors are gauge fields and transform as
$\delta\phi_{a_1 ... a_s} = {\cD}_{(a_1}\xi_{a_2 ... a_{s})}$ and
$\delta\psi^{\hat \alpha}{}_{a_1 .... a_{s-1/2}} = {\cD}_{(a_1}\xi^{\hat \alpha}{}_{a_2 ... a_{s-3/2})}$,
where ${\cD}$ is a background Lorentz derivative and $\xi_{a_1 ... a_{s-1}}$ and $\xi^{\hat \alpha}{}_{a_1 ... a_{s-5/2}}$
are gauge parameters. Both fields and gauge parameters satisfy certain
algebraic conditions, like trace and gamma-transversality constraints \cite{Fronsdal:1978rb, Fang:1978wz}.

In the framework of the unfolded approach a totally symmetric field of a given spin
is represented  as a differential $1$-form taking values in a definite $o(4,2)$ irreducible
representation \cite{Lopatin:1988hz,Vasiliev:1987tk,Vasiliev:2001wa,Alkalaev:2003qv}.\footnote{Let us
mention  other useful approaches to higher spin dynamics of totally symmetric fields
proposed in Refs.
\cite{Metsaev:1999ui,Segal:2001di,Vasiliev:2001dc,Zinoviev:2001dt,Francia:2002aa,Buchbinder:2002ry,Barnich:2004cr,Barnich:2005ga,Barnich:2006pc,Buchbinder:2007ak}  }
The $su(2,2)$ spinor realization of the unfolded fields
is the following.

\begin{itemize}

\item Spin-$s$ bosonic gauge fields \cite{Sezgin:2001zs,Vasiliev:2001wa}:
\be
\label{b1}
\Omega_\mu{}_{\;\beta_1 ... \beta_{s-1}}^{\;\alpha_1 ... \alpha_{s-1}}
\ee
\item Spin-$s$ fermionic gauge fields \cite{Alkalaev:2001mx,Sezgin:2001zs}:
\be
\label{f1}
\Omega_\mu{}^{\;\alpha_1 ... \alpha_{s-1/2}}_{\;\beta_1 ... \beta_{s-3/2}}
\oplus
\Omega^*_\mu{}^{\;\alpha_1 ... \alpha_{s-3/2}}_{\;\beta_1 ... \beta_{s-1/2}}
\ee

\end{itemize}

\noindent Here symbol $*$ denotes complex conjugation defined by
\be
\label{complex}
(X_\alpha)^* = X^\beta C_{\beta\alpha}\;,
\qquad
(Y^\alpha)^* =C^{\alpha\beta}Y_\beta\;,
\ee
where  $C^{\alpha\beta}=-C^{\beta\alpha}$ and
$C_{\alpha\beta}=-C_{\beta\alpha}$  are some real matrices satisfying
\be
\label{conjmatr}
C_{\alpha\gamma}C^{\beta\gamma}=\delta_\alpha{}^\beta\;.
\ee
We notice  that fermionic fields are described by a pair of mutually conjugated multispinors
while bosonic fields are self-conjugated.
All multispinors are symmetric in upper and lower groups of indices and traceless with respect to
$su(2,2)$ invariant tensor $\delta^\alpha_\beta$. The simplest fields in the list above are Maxwell field $\Omega_\mu$,
Rarita-Schwinger field $\Omega_\mu{}^\alpha$ and its conjugated $\Omega^*_\mu{}_\alpha$,  and the gravity field $\Omega_\mu{}^\alpha{}_\beta$, cf. \eqref{grav}.

Gauge symmetry for the above fields is defined by bosonic $0$-from  parameter $\xi^{\;\alpha_1 ... \alpha_{s-1}}_{\;\beta_1 ... \beta_{s-1}}$ and
mutually conjugated
fermionic $0$-from parameters $\xi ^{\;\alpha_1 ... \alpha_{s-1/2}}_{\;\beta_1 ... \beta_{s-3/2}}$
and $\xi^*{}^{\;\alpha_1 ... \alpha_{s-3/2}}_{\;\beta_1 ... \beta_{s-1/2}}$. The respective transformations
of $1$-from gauge fields are given by
\be
\label{sb1}
\delta \Omega^{\;\alpha_1 ... \alpha_{s-1}}_{\;\beta_1 ... \beta_{s-1}} = D_0 \xi^{\;\alpha_1 ... \alpha_{s-1}}_{\;\beta_1 ... \beta_{s-1}}\;,
\ee
and
\be
\label{sf1}
\delta \Omega^{\;\alpha_1 ... \alpha_{s-1/2}}_{\;\beta_1 ... \beta_{s-3/2}} = D_0 \xi^{\;\alpha_1 ... \alpha_{s-1/2}}_{\;\beta_1 ... \beta_{s-3/2}}
\ee
along with the complex conjugated expression.

The metric-like fields discussed in the beginning of the section are encoded into the
unfolded field \eqref{b1} and \eqref{f1} as their particular components that can be singled out
by imposing particular gauge fixing of the above symmetry. Such a mechanism is similar
to that one used in the gravity theory: the frame field contains a component to be identified
with the metric after gauge fixing local Lorentz symmetry.

\subsection{"Hook" massless fields in $\ads$}
\label{sec: hooks} The first non-trivial example of non-symmetric
fields is given by "hooks" which are bosonic spin-$(s,1)$ massless
fields. They can be described as tensor fields $\phi_{a_1 ...
a_s,\, b_1}(x)$ with two groups of symmetrized Lorentz indices
satisfying Young symmetry condition
\cite{Ouvry:1986dv,Labastida:1986ft}.\footnote{Exhaustive discussion of mixed-symmetry  bosonic gauge fields  both
in Minkowski and AdS spacetimes can be found, \textit{e.g.,} in Refs.
\cite{Siegel:1986zi,Metsaev:1995re,Pashnev:1998ti,Brink:2000ag,
Alkalaev:2003hc,Bekaert:2006ix,Zinoviev:2008ve,Skvortsov:2008vs,
Alkalaev:2008gi,Campoleoni:2008jq,Boulanger:2008up,Buchbinder:2008kw,Alkalaev:2009vm,
Bastianelli:2009eh,Skvortsov:2009zu}.}
The gauge transformations are $\delta \phi_{a_1 ... a_s,\, b_1} =
\cD_{a_1} \xi_{a_2 ... a_{s},\, b_1} +\cD_{b_1}\rho_{a_1 ... a_s}
+ ...$, where the dots denote
appropriate Young symmetrizations needed to adjust symmetry properties of both sides.
Here the gauge parameters $\xi_{a_2 ...
a_{s},\, b_1}$ and $\rho_{a_1 ... a_s}$ are rank-$(s-1,1)$ tensor
and rank-$(s,0)$ tensor, respectively. Both fields and gauge
parameters satisfy certain trace conditions \cite{Labastida:1986ft}.

It is worth noticing that in $5d$ Minkowski spacetime massless spin-$(s_1,1)$ fields
are dual to massless totally symmetric spin-$(s_1,0)$ fields while those with $s_2>1$
do not propagate local degrees of freedom. This fact is in agreement with that
local degrees of freedom of $5d$ Minkowski fields are described by irreducible
tensor representations of little Wigner algebra $o(3)$. In $\ads$ spacetime local degrees
of freedom of massless fields are classified according to $o(3)\oplus o(3)$
so mixed-symmetry massless fields with $s_2>1$ are not dynamically trivial.

A remarkable feature of non-symmetric fields is that they have
different number of gauge symmetries on Minkowski spacetime and $AdS_d$
spacetime \cite{Metsaev:1995re,Brink:2000ag}. Namely, given a mixed-symmetry massless
field in Minkowski spacetime we observe that only a part of
gauge symmetries can be deformed to $AdS_d$ spacetime. In the case under consideration, the symmetry that
survives in $AdS_d$  corresponds to the gauge parameter $\xi_{a_2
... a_{s},\, b_1}$. Lacking one of gauge symmetries on $AdS_d$ results
in a mismatch  between numbers of degrees of freedom propagated by
$\phi_{a_1 ... a_s,\, b_1}(x)$ in Minkowski and $AdS_d$ spacetimes.

In $\ads$ the spinor realization of the unfolded  spin-$(s,1)$ bosonic gauge fields is based on
the following 1-forms \cite{Sezgin:2001zs,Alkalaev:2006hq,Alkalaev:2003qv}:
\be
\label{b2}
\Omega_{\mu}{}^{\;\alpha_1 ... \alpha_{s}}_{\;\beta_1 ... \beta_{s-2}}
\oplus
\Omega^*_{\mu}{}^{\;\alpha_1 ... \alpha_{s-2}}_{\;\beta_1 ... \beta_{s}}
\ee
By analogy with fermionic fields  the "hooks" are complex fields described by a pair of mutually
conjugated multispinors. All multispinors are symmetric in upper and lower groups of indices and
traceless with respect to $su(2,2)$ invariant tensor $\delta^\alpha_\beta$.
Spinor version of gauge symmetry $\xi_{a_2
... a_{s},\, b_1}$ for the $\ads$ "hook" fields  is defined by mutually conjugated $0$-from  parameters
$\xi^{\;\alpha_1 ... \alpha_{s}}_{\;\beta_1 ... \beta_{s-2}}$ and
$\xi^*{}^{\;\alpha_1 ... \alpha_{s-2}}_{\;\beta_1 ... \beta_{s}}$ as
\be
\label{sb2}
\delta \Omega^{\;\alpha_1 ... \alpha_{s}}_{\;\beta_1 ... \beta_{s-2}} = D_0 \xi^{\;\alpha_1 ... \alpha_{s}}_{\;\beta_1 ... \beta_{s-2}}
\ee
along with the complex conjugated expression.

The simplest example of a non-symmetric field, an antisymmetric tensor,
is absent in \eqref{b2}. This happens because  $\ads$ antisymmetric gauge fields are
doubletons  which do not carry local degrees of freedom \cite{Gunaydin:1984wc,Gunaydin:1998sw,Ferrara:1998bv}. Therefore we set
$s>1$ and the first non-trivial example is given by spin-$(2,1)$ field.
Its spinor realization  is given by symmetric bispinor  $\Omega_{\mu}{}^{\alpha\beta}$
along with the  complex conjugated $\Omega^*_{\mu}{}_{\;\alpha\beta}$.

\subsection{Auxiliary spinor variables}
\label{sec:aux}

In practice it is convenient to represent higher spin  fields considered in the previous
sections as expansion coefficients of polynomials with respect to some set of
auxiliary spinor variables. It also brings to light a rich algebraic structure known
as Howe duality that allows one to control   group-theoretical properties
of (spin-)tensor fields in a manifest fashion.

Let us introduce two sorts of auxiliary Grassmann even variables
$a_\alpha$ and $b^{\beta}$, $\alpha,\beta = 1, ...,4$.
It is assumed that $a_\alpha$, $b^{\beta}$
and their derivatives $\dps\dl{a_\alpha}$, $\dps\dl{b^\beta}$ act
in the space $\cP_8$ of polynomials in eight spinor variables
\be
\label{decinAnadB}
F(a,b)  = \sum_{m,n=0}^\infty F^{\alpha_1 ... \alpha_m}_{\beta_1 ... \beta_n}\:a_{\alpha_1}\cdots a_{\alpha_m}\,
b^{\beta_1}\cdots b^{\beta_m}\;,
\ee
where expansion coefficients are multispinors totally symmetric in the upper and lower
groups of indices.

Space $\cP_8$ is a module of $gl(4)$ algebra realized by the following basis elements
\be
\dps G_\alpha{}^\beta  = \half \{a_\alpha,\dl{a_\beta}\} + \half \{b^\beta,\dl{b^\alpha}\}\;,
\ee
that produce  $gl(4)$ commutation relations
via usual commutator. Algebra $gl(4)$ acts homogeneously in $\cP_8$ thereby decomposing it
into finite-dimensional irreducible submodules. The expansion coefficients in
\eqref{decinAnadB} are then identified with $gl(4)$ tensors.

The condition that elements $F(a,b)\in \cP_8$ form an irreducible submodule
under $gl(4)$ transformations is expressed by a set
of the following constraints \cite{Vasiliev:2001wa},
\be
\label{Na}
N_a=a_\alpha\frac{\d}{\d a_\alpha}\;: \quad N_a F(a,b)=m F(a,b)\;,
\ee
\be
\label{Nb}
N_b=b^\alpha\frac{\d}{\d b^\alpha}\;: \quad N_b F(a,b)=n F(a,b)\;,
\ee
where $m$ and $n$ are some integers, and
\be
\label{T-}
T^- =\frac{1}{4}\frac{\d^2 }{ \d a_\alpha \d b^\alpha }\;:
\quad T^- F(a,b)=0\;.
\ee
Then one observes that above operators $N_a, N_b$ and $T^-$
supplemented by
\be
\label{T+}
T^+ = a_\alpha b^\alpha
\ee
form $gl(2)$ algebra. By construction the above $gl(4)$ and $gl(2)$ algebras are
mutually commuting. It is important that  $gl(4)$ invariant conditions \eqref{Na}-\eqref{T-}
are the highest weight (HW) conditions of $gl(2)$ algebra. Indeed,
by an appropriate change of basis one can identify elements
$T^\pm$ with upper- and lower-triangular subalgebras of $gl(2)$ algebra, while
$N_{a, b}$ are its Cartan elements.
Algebra $gl(2)$ can be decomposed in a standard fashion as
$gl(2)=sl(2)\oplus gl(1)$, where the $sl(2)$ part is given by
\be
\label{T}
T^\pm, \quad T^0 =\frac{1}{4} ( N_a +N_b  +4)
\ee
while the following combination
\be
\label{G0}
G^0=N_a-N_b\;
\ee
is identified with  $gl(1)$ basis element. The commutation relations of $sl(2)$
subalgebra are given by
\be
\label{sl2inv}
[ T^0 , T^\pm ] =  \pm  \frac{1}{2}T^\pm \;,
\qquad
[T^- , T^+ ] = T^0\;.
\ee
By definition, element  $G^0$ is central and therefore commutes with any element of
$sl(2)$.

The above consideration also remains valid for
$sl(4)\subset gl(4)$ subalgebra. To this end one notes that condition
\eqref{T-} still defines HW vector of $sl(4)\subset gl(4)$ while
conditions \eqref{Na} and \eqref{Nb} fix some integer weight of $sl(4)$
via
\be
T^0 F(a,b) = \frac{1}{4}(m+n+4)F(a,b)\;,
\ee
along with the following eigenvalue of $gl(1)$
\be
G^0 F(a,b) = (m-n)F(a,b)\;.
\ee
We see that $\cP_8$ is in fact a bimodule over
$gl(4)$ and $gl(2)$ algebras and its structure suggests that  the above two
algebras form Howe dual pair \cite{Howe}.

In addition to commuting auxiliary variables we introduce  auxiliary  Grassmann odd variables $\psi_i$ and $\bpsi^j$
with $i,j  =1,..., \cN$. It enables us to supersymmetrize the above
pure bosonic construction. To this end we introduce a superspace $\cP_{8|2\cN}$
of polynomials
\be
\label{decinAnadBpsi}
F(a,b, \psi, \bpsi)  = \sum_{m,n=0}^\infty \sum_{k,l=0}^\cN F_{\beta_1 ... \beta_m\, |\;j_1\ldots j_l}^{\alpha_1 ... \alpha_n|\;\, i_1\ldots i_k}
a_{\alpha_1} \ldots a_{\alpha_n}\, b^{\beta_1} \dots b^{\beta_m}\psi_{i_1}\cdots \psi_{i_k} \bpsi^{j_1}\cdots \bpsi^{j_l}\;,
\ee
where expansion coefficients are multispinors with two groups of totally symmetric
indices and two groups of totally anti-symmetric indices.
Superspace $\cP_{8|2\cN}$ is a module of $gl(4|\cN)$ superalgebra with the following basis
elements
\be
\ba{c}
\dps G_\alpha{}^\beta  =  \half \{a_\alpha,\dl{a_\beta}\} + \half \{b^\beta,\dl{b^\alpha}\}\;,
\\
\\
\dps
Q_\alpha^i = a_\alpha \bpsi^i+ \dl{b^\alpha}\dl{\psi_i}\;,
\qquad
\bar Q^\alpha_i = b^\alpha\psi_i + \dl{a_\alpha}\dl{\bpsi^i}\;,
\\
\\

\dps
U_i{}^j = \half \big[\psi_i, \dl{\psi_j}\big]+\half \big[\bpsi^j,\dl{\bpsi^i}\big]\;,
\ea
\ee
and $\cP_{8|2\cN}$ decomposes into $gl(4|\cN)$ invariant  submodules.

Introducing Grassmann odd variables
enables one to extend the above bosonic realization of $gl(2)$ algebra. The respective basis elements of $sl(2)$
are given by \cite{Alkalaev:2002rq}
\be
\label{fermionsl(2)}
P^+ = T^+-\psi_i\bpsi^i\,,
\quad
P^- =T^- +\frac{1}{4}\frac{\d^2 }{ \d \bpsi^i \d \psi_i } \,,
\quad
P^0 =T^0+\frac{1}{4}(N_{\psi} +N_{\bpsi}-\cN)\,\,,
\ee
and $gl(1)$ basis element is
\be
\label{Z0}
Z^0 = G^0 + N_{\psi} - N_{\bpsi}\;,
\ee
where
\be
\label{Npsi}
\dps N_{\psi}=\psi_i\frac{\d}{\d \psi_i}\,,
\qquad
N_{\bpsi}=\bpsi^i\frac{\d}{\d \bpsi^i}\,.
\ee
The respective $sl(2)$ commutation relations are
\be
[ P^0 , P^\pm] =  \pm \frac{1}{2} P^\pm \;,
\qquad
[P^- , P^+ ] =  P^0\;.
\ee
By construction, the above $gl(2)$ algebra  and $gl(4|\cN)$ superalgebra
are mutually commuting and form  Howe dual pair.
It makes possible to study $gl(4|\cN)$ irreducible submodules in $\cP_{8|2\cN}$
via imposing the following $sl(2)$ HW condition
\be
\label{HW}
P^- F(a,b, \psi, \bpsi) = 0 \;,
\ee
along with some fixed eigenvalues of  $sl(2)$ Cartan element $P^0$ and
$gl(1)$ element $Z^0$. It is worth noting that the present construction
describes only  particular class of $gl(4|\cN)$ irreducible representations.

Up to now we considered complex $gl(4|\cN)$ superalgebra.
However, we are interested in  $su(2,2|\cN)$ superalgebra that is defined as
an appropriate real form of $sl(4|\cN)\subset gl(4|\cN)$. The respective reality condition are
given by
\be
\label{auxinv}
({a}_\alpha)^*   = b^\beta C_{\beta
\alpha}\,,\quad  ({b}^\alpha)^*  = C^{\alpha\beta} a_\beta
\,,
\qquad
(\psi_i)^* = \bpsi^i\;,\quad
(\bpsi^i)^* = \psi_i\;,
\ee
where conjugation matrices are defined by \eqref{conjmatr}.
Then it follows that  $a_\alpha$ and $b^\beta$ are in the fundamental
and the conjugated fundamental representations of $su(2,2)$ while
$\psi_i$ and $\bpsi^i$ are in the fundamental
and the conjugated fundamental representations of $u(\cN)$.

In Section \bref{sec:oscillator} we discuss a star-product realization of the above
construction.
Finally, we note that the
Howe dual pair $gl(4|\cN)$ and $gl(2)$ coincides with that one discussed
in \cite{Alkalaev:2008gi} within the BRST framework.

\subsection{Gauge fields as polynomials in auxiliary variables}
\label{sec:gaugefieldsaux}

The unfolded gauge fields discussed in Sections \bref{sec: totally symmetric} and
\bref{sec: hooks} can be collectively represented as a pair of mutually
conjugated multispinors
\be
\label{sreda}
\Omega_\mu{}^{\alpha_1...\, \alpha_{s_1+s_2-1}}_{\;\beta_1 ... \, \beta_{s_1-s_2-1}}
\oplus \Omega_\mu^*{}_{\;\alpha_1...\, \alpha_{s_1+s_2-1}}^{\,\beta_1 ... \, \beta_{s_1-s_2-1}}\;,
\ee
provided that $s_1= s$ and $s_2 = 0,\half, 1$.
Using spinor auxiliary variables introduced in the previous section we
define the above massless gauge fields
as follows
\be
\label{gen}
\Omega(a, b|x) =
dx^\mu\,\Omega_\mu{}^{\alpha_1...\, \alpha_{s_1+s_2-1}}_{\beta_1 ... \, \beta_{s_1-s_2-1}}(x)
\:
a_{\alpha_1}...\, a_{\alpha_{s_1+s_2-1}}b^{\beta_1} ...\, b^{\beta_{s_1-s_2-1}}
\;
\ee
along with the complex conjugated $\Omega^*(a, b|x)$.
The associated linearized higher
spin curvature is a $2$-from $R_1 = \dps \half dx^\mu\wedge dx^\nu R_1{}_{\,\mu\nu}(a,b|x)$ given by
\be
\label{curv}
R_1 = D_0 \Omega \equiv  d\:\Omega +
\Omega_0{}^\alpha{}_\beta (b^\beta\frac{\d}{\d b^\alpha}-a_\alpha\frac{\d}{\d
a_\beta})\wedge \Omega\;,
\ee
where $\Omega_0{}^\alpha{}_\beta$ is the background 1-form connection satisfying the
zero-curvature condition (\ref{zerocurv}), and the background covariant
derivative is given by
\be
D_0=d +\Omega_0{}^\alpha{}_\beta (b^\beta\frac{\d}{\d
b^\alpha}-a_\alpha\frac{\d}{\d a_\beta})\;.
\ee
Subscript $1$ indicates that  curvature \eqref{curv}
is a  linearized part of some full non-Abelian curvature introduced in Section \bref{sec:gauging}.
The gauge transformations are
\be
\label{hstr}
\delta\Omega=D_0\xi\;,
\ee
where a gauge parameter is a 0-form $\xi = \xi(a,b|x)$.
As a corollary of $D_0^2=0$ it follows that
\be
\label{curvinar}
\delta R_1(a,b|x)=0\;,
\ee
while the respective Bianchi
identities read as
\be
D_0R_1(a,b|x)=0\;.
\ee
Using  $gl(2)$ basis elements (\ref{T}) one easily formulates algebraic conditions
on $\Omega(a,b|x)$ that single out an irreducible field of a given spin as
the respective $gl(2)$ HW condition
\be
\label{HWgauge}
T^- \Omega(a,b|x)=0\;,
\ee
along with particular eigenvalues of Cartan elements
\be
\label{IrreducibleSpin}
\ba{l}
\dps
N_a \Omega(a,b|x) = (s_1+s_2-1)\Omega(a,b|x)\;,
\\
\\
\dps
N_b \Omega(a,b|x) = (s_1-s_2-1)\Omega(a,b|x)\;.
\ea
\ee
The last two conditions can be equivalently rewritten as
\be
\ba{c}
\dps
T^0\Omega(a,b|x) = \frac{1}{2}(s_1+1)\Omega(a,b|x)\;,
\\
\\
\dps
G^0\Omega(a,b|x)  = 2s_2\Omega(a,b|x)\;.
\ea
\ee
It is obvious that the associated curvatures satisfy the same algebraic constraints.

 In the subsequent analysis we use the following set of differential operators
in auxiliary spinor variables   \cite{Vasiliev:2001wa}
\be
\label{S}
S^- = a_\alpha \frac{\d}{\d b^\beta}V^{\alpha\beta}\:,\qquad
S^+ = b^\alpha \frac{\d}{\d a_\beta} V_{\alpha\beta}\:,\qquad
\dps S^0 = N_b - N_a\;.
\ee
They explicitly involve the compensator field and form $sl(2)$ algebra
\be
\label{sl2invS}
[ S^0 , S^\pm ] =  \pm  \frac{1}{2}S^\pm \;,
\qquad
[S^- , S^+ ] = S^0\;.
\ee
Note that the above set of $sl(2)$ elements commute with other $sl(2)$ elements
introduced earlier in Section \bref{sec:aux}. It is worth noting that $sl(2)$ algebra
\eqref{sl2invS} can be interpreted as Howe dual algebra for the Lorentz subalgebra of
$su(2,2)$. We hope to consider this issue in a more detail  elsewhere.

Irreducible $su(2,2)$ gauge fields  can be further decomposed with
respect to Lorentz subalgebra. The resulting Lorentz fields are given by the following
collection of differential $1$-forms
\be
\label{su_lor_fields}
\omega^t_\mu{}^{\;\alpha_1... \, \alpha_{s_1+s_2+t-1},\,\beta_1 ... \, \beta_{s_1-s_2-t-1}}(x)
\;,
\qquad
0\leq t\leq s_1-s_2-1\;,
\ee
that satisfy
the Young symmetry condition and the $V_{\alpha\beta}$-transversality condition.
Recall that compensator $V^{\alpha\beta}$ can be used to raise and lower indices in the Lorentz-invariant manner,
see \eqref{Vsympl}.
Fields \eqref{su_lor_fields} can be described as expansion coefficients of
\be
\label{omegas}
\omega^{t}(a,b|x) = d x^\mu\omega_\mu^{t}(a,b|\,x)\;.
\ee
Irreducibility
conditions imposed on Lorentz-covariant tensors have the form of two $gl(2)$ HW conditions
\be
\label{irr}
S^-\omega^{t} = 0\;, \qquad T^-\omega^{t} = 0\;.
\ee
The first condition is in fact the Young symmetry condition while the second one
tells us that Lorentz tensors are transversal to  compensator $V^{\alpha\beta}$.
The last condition expresses  the fact that we describe Lorentz irreps
in a manifestly $su(2,2)$ covariant manner. Indeed, operators \eqref{S} enables one
to write down a decomposition of an irreducible $su(2,2)$ gauge field as
\be
\label{expan}
\Omega(a,b| x) = \sum_{t=0}^{\sm} (S^+)^t \:\omega^{\,t}(a,b|x).
\ee
Since  $sl(2)$ algebras
\eqref{sl2inv} and \eqref{sl2invS} mutually commute one concludes that the second
HW condition in \eqref{irr} on $\omega^{\,t}(a,b|x)$ follows from
HW condition \eqref{HWgauge} on $\Omega(a,b|x)$.

The background covariant derivative can be cast into explicit Lorentz-covariant  form
as $D_0 = \cD_0 +\sigma_- + \lambda\sigma_0 + \lambda^2\sigma_+ $,
where $\cD_0$ stands for  Lorentz derivative constructed with respect to
background Lorentz connection $w^{\alpha\beta}$, while $\sigma$-operators
satisfy the relations
\be
\label{sigmas}
(\sigma_\pm)^2=0\;,
\quad
\{\sigma_0,\sigma_\pm \} = 0\;,
\quad
{\cal D}^2+\lambda^2\, \{\sigma_-,\sigma_+\}+\lambda^2\,(\sigma_0)^2  =0\;,
\ee
that follow from $D_0^2 = 0$. The explicit expressions for $\sigma$-operators are given  in \cite{Alkalaev:2006hq}.
It is worth noting that non-trivial $\sigma_0$ appears not only for fermionic totally symmetric
fields but also for  bosonic and fermionic mixed-symmetry fields.

Lorentz-covariant fields $\omega^{t}$ at different values of parameter $t$ play
different dynamical roles. One distinguishes between physical, auxiliary, and extra fields.

\begin{itemize}

\item For integer spin-$(s,0)$ system:  fields with $t=0$ are called physical, fields with $t=1$
are auxiliary ones, fields with $t>1$ are called extra fields.

\item For half-integer spin-$(s,\frac{1}{2})$ system: fields with $t=0$ are physical ones, fields
with $t>0$ are extra fields. The absence of auxiliary fields is a manifestation
of the first-order form of the fermionic field equations.

\item For integer spin-$(s,1)$ system: fields with $t=0$ are physical and auxiliary ones, fields
with $t>0$ are extra fields. Physical field is identified with ${\rm Re}\, \omega^{t}$, an auxiliary
field is identified with ${\rm Im}\,\omega^{t}$. In particular, it allows one to cast the dynamical
equations of non-symmetric fields into the first-order form.
The analogous decomposition into pure real and
imaginary parts duplicates the number of (real) extra fields. For more details see \cite{Alkalaev:2006hq}.

\end{itemize}

The unfolded dynamical higher spin equations of motion can be represented
as a system of variational equations and certain constraints. Variational equations
involve just physical and auxiliary fields,  and auxiliary field  is expressed
via first derivative of the physical field, while the constraints
express all extra fields via derivatives of the physical field.
The next two sections discuss the action functional and the appropriate constraints.

\subsection{Higher spin  action functionals}

One of basic advantages of using the unfolded formulation is that  quadratic action
functionals for higher spin fields can be represented in a manifestly gauge-invariant fashion. The
actions
have the form of a bilinear combination of  linearized curvatures so
the gauge invariance of the action is a direct consequence of \eqref{curvinar}.

The $\ads$  action functional
involves HS fields  described  as polynomials in  two sets of auxiliary variables
$X_1=(a_1{}_\alpha, \beta_1^\beta)$ and $X_2=(a_2{}_\alpha, \beta_2^\beta)$. The action
functional is built then in the following schematic form
\be
\label{actschem}
S = \int_{\cM^5}\hat H\Big(E, V, \frac{\d}{\d X_1}\,, \frac{\d}{\d X_2}\Big) \wedge \; R(X_1)\wedge \;R(X_2)
\Big|_{X_1=X_2=0}\;,
\ee
where $\hat H$ is a polynomial in the compensator and  auxiliary variable derivatives
acting on a tensor product of two field strengths $R(X_{})$. Also, since the integrand  is
$5$-form it follows that $\hat H$ is a 1-form proportional to the  frame
field $E_\mu^{\alpha\beta}$. Expansion coefficients of
$\hat H$  with respect to derivatives in auxiliary variables are some $su(2,2)$ covariant
tensors built of $V^{\alpha\beta}$ and $\delta^\alpha_\beta$ and their combinations
parameterize various types of index contractions
between curvatures. Any such action is manifestly $su(2,2)$
invariant and automatically gauge-invariant with respect to the gauge transformations (\ref{hstr}).

Generally, actions of the type \eqref{actschem} do not describe propagation of a correct
number of on-shell degrees of freedom because of redundant dynamical modes associated with the
extra fields. In order to eliminate their contribution one should fix the operator $\hat  H$ in an
appropriate form by virtue of the extra field  decoupling condition. It requires that
the variation of the quadratic action with respect to  extra fields is identically zero,
\be
\label{exdc}
\frac{\delta { S}_2^{}}{\delta\omega^{ex}} \equiv 0\;.
\ee
Extra fields maintain an explicit gauge invariance of the action functional but the above
condition constrains them to fall out of the quadratic action.
Having decoupled extra fields, the action can be cast into a minimal form with just two
fields, physical and auxiliary ones, but then the residual gauge invariance is implicit.
Nonetheless, for both versions of the action, minimal form with two fields and
non-minimal with added extra fields,  the respective  free field equations of motion always
have manifestly gauge-invariant form, \textit{i.e.}, they are represented as linear combinations
of linearized higher spin curvatures.

The action for spin-$(s_1,s_2)$ massless gauge field is searched in the
following form \cite{Vasiliev:2001wa,Alkalaev:2001mx,Alkalaev:2006hq}
\be
\label{act}
S^{(s_1, s_2)}_2 =\int_{{\cal M}^5} {\hat H}\wedge  R(a_1,b_1)\wedge
{R}^*(a_2,b_2)|_{a_i=b_i=0}\;,
\qquad
s_2 = 0,\half, 1\;,
\ee
where $R$ and $R^*$ are  mutually conjugated  linearized spin-$(s_1,s_2)$ curvatures (\ref{curv})
and $\hat H$ is the following 1-form differential operator
\be
\label{H}
\ba{ccc}
\dps \hat  H=
\Big(\alpha(p,q) E_{\alpha\beta} \frac{\d^2}{\d a_{1\alpha} \d
a_{2\beta}}{b}_{12}
+\beta(p,q) E^{\alpha\beta} \frac{\d^2}{\d b_1^\alpha \d
b_2^\beta}{a}_{12} & &
\\
\\
\dps
& & \dps \hspace{-9cm}+\gamma(p,q) E_\alpha{}^\beta \frac{\d^2}{\d a_{2\alpha} \d b_1^\beta}{c}_{12}
+\zeta(p,q) E_\alpha{}^\beta \frac{\d^2}{\d a_{1\alpha} \d b_2^\beta}{c}_{21}
\Big)\,({c}_{12})^{2s_2}\;.
\ea
\ee
Here $E^{\alpha\beta}$ is the frame field \eqref{decomframelor}. For quadratic
action under consideration the frame field is taken to be background
\be
E^{\alpha\beta} = h^{\alpha\beta}\;,
\ee
so dynamical fields are contained
in the linearized curvatures only.
The coefficients $\alpha,\;\beta,\;\gamma$ and $\zeta$  are functions of operators
\be
p={a}_{12}{b}_{12}\;,\qquad q={c}_{12}{c}_{21}\;,
\ee
where
\be
\label{abg}
\ba{cc}
\dps{a}_{12} = V_{\alpha\beta}\frac{\d^2}{\d a_{1\alpha} \d
a_{2\beta}}\;,&
\qquad\dps {b}_{12} = V^{\alpha\beta}\frac{\d^2}{\d b_1^\alpha \d
b_2^\beta}\;,
\\
\\
\dps{c}_{12} = \frac{\d^2}{\d a_{1\alpha} \d b_2^\alpha}\;,&
\dps{c}_{21} = \frac{\d^2}{\d a_{2\alpha} \d b_1^\alpha}\;.
\\
\\
\ea
\ee
These functions are responsible for various types of index  contractions
between the background frame field, compensator and curvatures.

Below we list solutions of the extra field decoupling condition  for
totally symmetric bosonic and  fermionic fields, and for bosonic "hook" fields.
Note that quadratic actions are defined modulo total derivative contributions.

\begin{itemize}

\item Spin-($s,0$) bosons:

\be
\label{K1}
\ba{c}
\dps
\alpha(p,q) =   2 \int_0^1 d\tau \big(1+q\frac{\d}{\d q} \big)\rho(\tau p+q)\;,
\\
\\
\zeta(p,q) + \gamma(p,q) = 0\;,
\qquad
\beta(p,q) = 0\;,
\qquad
\gamma(p,q) = \rho(p+q)\;.

\ea
\ee

\item Spin-$(s,\frac{1}{2})$ fermions:

\be
\label{K2}
\ba{c}
\dps
\alpha(p,q) =   - \int_0^1 d\tau \frac{\d}{\d p}\rho(p\tau + q)\;,
\\
\\
\dps
\gamma(p,q) = 0\;,
\qquad
\beta(p,q) = 0\;,
\qquad
q\zeta(p,q) = \rho(p+q)\;.
\ea
\ee

\item Spin-$(s,1)$ bosons:

\be
\label{K3}
\ba{c}
\dps
\dps \alpha(p,q) = -\int_0^1 d\tau \frac{\d}{\d p}\rho(p\tau+q)\;,
\\
\\
\dps
\gamma(p,q) = 0\;,
\qquad
\beta(p,q) = 0\;,
\qquad
q\zeta(p,q) = \rho(p+q)\;.
\ea
\ee

\end{itemize}
We see that the quadratic action for a given spin is fixed unambiguously up
to overall factors parameterized by polynomials  $\rho(p+q)$ of fixed order, $\rho(p+q) = \rho_0 (p+q)^{s_1-2}$ for
$s_2=0$ fields and $\rho(p+q) = \rho_0(p+q)^{s_1-s_2-1}$
for $s_2\neq 0$ fields, $\rho_0$ are arbitrary constants. Constants $\rho_0$ cannot be fixed from the
free field analysis and represent the leftover ambiguity in
the coefficients.
On the other hand, requiring gauge invariance in the cubic theory fixes
$\rho(p+q)$ unambiguously, see Section \bref{sec:reviewFV}.

\subsection{Generalized Weyl tensors and constraints}
\label{sec:weyl}

As discussed in the previous section in order to have a manifest higher spin gauge invariance,
the quadratic action is always written down with the extra fields, at least formally. It turns out that
on the interaction level variation of the action with respect to extra fields cannot be consistently required
to vanish identically. It follows that proper constraints should be imposed expressing extra fields in terms
of physical fields  thereby preserving a correct number of gauge symmetries and physical degrees of freedom.

We assume that constraints for extra fields should have the following form \cite{Vasiliev:2001wa,Alkalaev:2002rq,Alkalaev:2006hq}
\be
\label{constr}
\hat \Upsilon_2^+ \wedge r^t=0\;,\quad 0\leq t<s_1-s_2-1\;,
\quad s_2 = 0,\half,1\;,
\ee
where $\hat \Upsilon_2^+ $ is some 2-form operator increasing grading $t$ by one. It satisfies the condition
\be
\label{prop}
\sigma_+\wedge \hat \Upsilon_2^+ =0
\ee
that guarantees that the number of independent algebraic relations imposed on the curvature $r^t$ coincides with the number of
components of extra fields $\omega^{t>0}$ modulo pure gauge components of the form
$\delta\omega^{t+1} = \sigma_-\xi^{t+2}$. One can show that the operator
$\hat \Upsilon_2^+$ is uniquely fixed in the form
\be
\hat \Upsilon_2^+ = {\sigma}_0\wedge {\sigma}_+\;.
\ee
Constraints \eqref{constr} are described by 4-form which in $d=5$ dimensions is dual to 1-form
so it follows that the number of
equations in (\ref{constr}) coincides with the number of components
of $\omega^{\,t+1}$. Therefore, field $\omega^{t+1}$ can be expressed via derivatives of the
field $\omega^{t}$ for any $t>0$. Finally, one can obtain fields $\omega^{t}$ expressed in terms of derivatives of the field
$\omega^0$ with an order of highest derivatives equal to $t$.
The schematic form of the corresponding expressions is
\be
\label{coco}
\omega^t \sim  \Big(\frac{\d}{\lambda \d x}\Big)^t\,  \omega^0\;.
\ee
On the non-linear level such expression for extra fields provide a useful parameterization
of higher derivatives in the higher spin interaction terms.

Next we cite  the proposition
known in the literature as the first on-mass-shell theorem, see,
\textit{e.g.}, \cite{Vasiliev:1986td,Lopatin:1988hz,Vasiliev:1987tk,Sezgin:2001zs}.

\begin{prop}
\label{fronprop}
Variational equations of motion for spin-$(s_1,0)$  and spin-$(s_1,1/2)$ fields
supplemented with the constraints for extra fields can be equivalently
rewritten as
\be
\label{COMST1}
R^{\alpha_1 ... \alpha_{s_1-1}}_{\beta_1 ... \beta_{s_1-1}} =
H_2{}_{\delta\rho} C_0^{\alpha_1 ... \alpha_{s_1-1} \gamma_{1} ... \gamma_{s_1 - 1}\delta\rho}
V_{\gamma_{1} \beta_{1}} \cdots V_{\gamma_{s_1-1} \beta_{s_1-1}}\;,
\ee
and
\be
\label{COMST2}
R^{\;\alpha_1 ... \alpha_{s_1-1/2}}_{\;\beta_1 ... \beta_{s_1-3/2}} =
H_2{}_{\,\delta\rho} C_{1/2}^{\alpha_1 ... \alpha_{s_1-1/2}
\gamma_{1} ... \gamma_{s_1 - 3/2}\delta\rho}
V_{\gamma_{1}\beta_{1}} \cdots V_{\gamma_{s_1 - 3/2} \beta_{s_1 - 3/2}}\;,
\ee
plus analogous expression for complex conjugated curvatures.
Here $H_2{}_{\,\delta\rho} = h_\delta{}^\gamma \wedge h_{\gamma\rho}$.
Totally symmetric multispinor $C_0^{\alpha_1 ... \alpha_{2s_1}}$ is a generalized bosonic Weyl tensor, and
totally symmetric multispinors $C_{1/2}^{\alpha_1... \alpha_{2s_1}}$ and its complex
conjugated constitute generalized fermionic Weyl tensor.
\end{prop}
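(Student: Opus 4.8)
The plan is to reduce the gauge-invariant field equations to their Lorentz-irreducible content and to isolate the single component that survives on-shell. First I would expand the linearized curvature $R_1=D_0\Omega$ into Lorentz-covariant pieces. Using $\Omega=\sum_t (S^+)^t\omega^t$ from \eqref{expan} together with the grading decomposition $D_0=\cD_0+\sigma_-+\lambda\sigma_0+\lambda^2\sigma_+$, the curvature splits into $2$-forms valued in irreducible Lorentz tensors indexed by $t$. Since $\sigma_\pm$ shift $t$ by $\pm1$ and $\sigma_0$ preserves it, these components assemble into a $\sigma$-complex whose differentials obey the nilpotency relations \eqref{sigmas}. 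I would organize them so that $t=0$ carries the physical field, $t=1$ (for bosons) the auxiliary field, and $t>1$ the extra fields.

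Next comes the main step: showing that the variational equations of motion obtained from \eqref{act} with the decoupling coefficients \eqref{K1} for bosons and \eqref{K2} for fermions, supplemented by the extra-field constraints \eqref{constr}, are equivalent to demanding that every Lorentz component of $R_1$ vanish except one extremal piece. The constraints \eqref{constr}, built from $\hat\Upsilon_2^+=\sigma_0\wedge\sigma_+$ with the property \eqref{prop}, express each extra field $\omega^{t+1}$ as a first derivative of $\omega^t$, so that iterating produces \eqref{coco} and simultaneously annihilates every curvature component lying in the image of $\sigma_+$ acting on extra fields. The physical and auxiliary field equations then kill the remaining low-grading components: for bosons this yields the torsion-type relation fixing the auxiliary field as $\cD_0\omega^0$ along with the second-order dynamical equation, while for fermions, where no auxiliary field is present and the dynamics is first-order, the variational equation directly removes all but the extremal piece.

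I would then identify the surviving component with the Weyl tensor. The one piece of $R_1$ not forced to vanish lies in the kernel of $\sigma_-$ at maximal grading under $\sigma_+$; by the Young and $V$-transversality conditions \eqref{irr} it is a totally symmetric multispinor with $2s_1$ indices. Factoring out the two background frame fields in $H_2{}_{\delta\rho}=h_\delta{}^\gamma\wedge h_{\gamma\rho}$, the curvature being a $2$-form quadratic in the frame, and lowering the appropriate number of indices with the compensators $V$ reproduces precisely the right-hand sides of \eqref{COMST1} and \eqref{COMST2}, with $C_0$ and $C_{1/2}$ the generalized bosonic and fermionic Weyl tensors; the complex-conjugate relations follow by applying the $*$-operation of \eqref{complex}.

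The hard part will be the representation-theoretic bookkeeping that guarantees the number of independent relations produced by \eqref{constr} exactly matches the number of extra-field components, so that the reduction loses no dynamical information and introduces no spurious constraint. This is where the property \eqref{prop}, $\sigma_+\wedge\hat\Upsilon_2^+=0$, together with the duality in $d=5$ between the $4$-form constraint and a $1$-form, must be used carefully. A secondary subtlety is the parallel treatment of the fermionic tower: with no intermediate auxiliary field, one has to verify that the first-order variational equation alone already collapses $R_1$ onto the single component $C_{1/2}$, matching the half-integer index pattern of \eqref{COMST2}.
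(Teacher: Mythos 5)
Your overall architecture does match the method the paper actually uses: note first that the paper does not prove Proposition \bref{fronprop} itself --- it cites it from the literature --- and the only proof it supplies is for the mixed-symmetry analogue, Proposition \bref{hookprop}. That proof follows the plan you outline: treat the variational equations and the constraints \eqref{constr} as a linear system on the Lorentz-irreducible components of the curvature, show the system kills everything except one extremal component, and identify that component with the generalized Weyl tensor.

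The genuine gap is that you never invoke the Bianchi identities, and in the paper's argument they do indispensable work at two distinct points. First, at grading $t=0$ the variational equations do \emph{not} kill all the low-grading components: in the proof of Proposition \bref{hookprop} four irreducible components survive the equations of motion, only two of which are removed by the constraint at $t=0$; the remaining two (the $(s,s)$ and $(s-2,s-2)$ pieces) are eliminated by projecting the Bianchi identity $\cD r^{\,t=0}+\lambda\,\sigma_0 r^{\,t=0}+\sigma_- r^{\,t=1}=0$ onto them, i.e.\ through the $\sigma_0$ term --- and since the paper notes that $\sigma_0$ is nontrivial precisely for totally symmetric \emph{fermions}, your claim that for fermions ``the variational equation directly removes all but the extremal piece'' skips exactly the step that matters for \eqref{COMST2}. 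Second, your assertion that the survivor ``lies in the kernel of $\sigma_-$ at maximal grading'' presupposes the cohomological equation $\sigma_- r^{\,t>0}=0$, which follows neither from the constraints nor from the equations of motion: it follows from the Bianchi identities at grading $t$ once $r^{\,t}=0$ has been established, and this is what drives the induction up the tower, with the $\sigma_-$-cohomology of $2$-form curvatures empty at intermediate gradings and one-dimensional (the Weyl tensor) at the top, modulo exact contributions. Without this mechanism the vanishing at $t=0$ does not propagate to $t>0$, so your reduction is incomplete as stated. The counting issue you flag at the end (matching the number of relations in \eqref{constr} to the extra-field components via the property \eqref{prop} and the $4$-form/$1$-form duality in $d=5$) is indeed handled in the paper exactly as you anticipate.
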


\vspace{2mm}

In particular, the above proposition tells us that  all Lorentz-covariant curvatures except for that
with $t=s_1-1$ for bosons and $t=s_1-3/2$ for fermions can be set to zero
on-shell provided appropriated constraints are imposed.
The proposition generalizes the well-known construction of  Weyl tensor in gravity.

Now we formulate and prove the analogous proposition for $\ads$ mixed-symmetry fields
of particular integer spin $(s_1,1)$. Actually, higher spin field equations of the form $R = H_2 C$
are known to describe  $\ads$ "hook" field dynamics \cite{Sezgin:2001zs}. Here
we prove that these equations do arise as variational equations supplemented
by some constraints thus guaranteeing  the proposed action functional for "hook"
fields \eqref{act} correctly describes physical degrees of freedom.

\begin{prop}
\label{hookprop}
Variational equations of motion for spin-$(s_1,1)$ fields supplemented with the constraints can be equivalently
rewritten as
\be
\ba{l}
R^{\;\alpha_1 ... \alpha_{s_1}}_{\beta_1 ... \beta_{{s_1}-2}} =
H_2{}_{\delta\rho} C_1^{\alpha_1 ... \alpha_{s_1} \gamma_{1} ... \gamma_{{s_1} - 2}\delta\rho}
V_{\gamma_{1}\beta_{1}} \cdots V_{\gamma_{{s_1} - 2} \beta_{{s_1} - 2}}
\\
\\
\dps
R^*{}_{\beta_1 ... \beta_{s_1}}^{\;\alpha_1 ... \alpha_{{s_1}-2}} =
H_2{}^{\,\delta\rho} C^*_1{}_{\beta_1 ... \beta_{s_1} \gamma_{1} ... \gamma_{{s_1}-2} \delta\rho}
V^{\gamma_{1} \alpha_{1}} \cdots V^{\gamma_{{s_1}-2} \alpha_{{s_1}-2}}\;.
\ea
\ee
Here $H_2{}_{\,\delta\rho} = h_\delta{}^\gamma \wedge h_{\gamma\rho}$. Totally symmetric multispinors
$C_1^{\alpha_1... \alpha_{2s}}$ and $C^*_1{}_{\beta_1 ... \beta_{2s}}$ are
mutually conjugated and constitute generalized Weyl tensor for "hook" fields.
\end{prop}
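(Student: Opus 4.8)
The plan is to follow the proof of the first on-mass-shell theorem (Proposition \ref{fronprop}) for totally symmetric fields and adapt its grading analysis to the mixed-symmetry setting. I would take as input the variational equations of motion obtained by varying the quadratic action \eqref{act} with the coefficient functions \eqref{K3}, together with the constraints \eqref{constr} that express the extra fields through derivatives of the physical and auxiliary fields. The working tool is the Lorentz-covariant split $D_0 = \cD_0 + \sigma_- + \lambda\sigma_0 + \lambda^2\sigma_+$, which decomposes the $su(2,2)$ curvature $R_1 = D_0\Omega$, via the expansion \eqref{expan}, into Lorentz-irreducible components $r^t$ graded by $0 \leq t \leq s_1-2$. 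The $\sigma$-operators shift this grading---$\sigma_\pm$ by $\pm 1$ and $\sigma_0$ by zero---so each component takes the schematic form $r^t = \cD_0\omega^t + \sigma_-\omega^{t+1} + \lambda\sigma_0\omega^t + \lambda^2\sigma_+\omega^{t-1}$, with the top term $\sigma_+\omega^{s_1-2}$ vanishing automatically at the edge of the field's spin content.

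The core of the argument is a grading-by-grading elimination along the lines of the gravitational Weyl-tensor construction. The constraints \eqref{constr}, built from the operator $\hat\Upsilon_2^+ = \sigma_0\wedge\sigma_+$, algebraically determine each extra field in terms of the derivatives of the fields below it, and ultimately of $\omega^0$ as in \eqref{coco}; by property \eqref{prop} the count of relations matches the number of extra-field components modulo the pure gauge shift $\delta\omega^{t+1} = \sigma_-\xi^{t+2}$. The variational equations of motion then supply the dynamical content of the physical and auxiliary fields. Using the nilpotency and anticommutation relations \eqref{sigmas}, I would show that constraints and field equations together force every Lorentz component of $R_1$ to vanish except the top one at $t = s_1-2$, exactly as in Proposition \ref{fronprop}. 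This surviving top component is the gauge-invariant curvature that cannot be removed, namely the generalized hook Weyl tensor.

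It remains to translate the surviving component into the $su(2,2)$-covariant index form of the statement. The top Lorentz field strength is itself of the form $H_2{}_{\delta\rho}\,C_1$ with $H_2{}_{\delta\rho} = h_\delta{}^\gamma\wedge h_{\gamma\rho}$ built from two background frame fields, and the totally symmetric multispinor $C_1^{\alpha_1\ldots\alpha_{2s_1}}$ is the hook Weyl tensor. Restoring the full $su(2,2)$ curvature through the expansion \eqref{expan}, whose raising operator $S^+ = b^\alpha\frac{\d}{\d a_\beta}V_{\alpha\beta}$ carries the compensator, produces precisely the $V_{\gamma\beta}$-contractions of the claimed formula; the conjugate equation for $C^*_1$ then follows by applying the complex conjugation \eqref{complex}.

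The hard part, and where the hook case genuinely departs from Proposition \ref{fronprop}, is the low-grading sector. Two features complicate the elimination: the nontrivial operator $\sigma_0$, which unlike the symmetric bosonic case couples fields at the same grading, and the fact that the physical and auxiliary fields both sit at $t=0$, realized as the real and imaginary parts of $\omega^0$, so that the variational equations act on these parts non-diagonally. Verifying that the field equations remove all but the top component---neither leaving a spurious propagating mode nor over-constraining the physical field---requires a careful dimension count matching the extra-field spaces against the algebraic relations imposed by $\hat\Upsilon_2^+$, and this is the step I expect to carry the real weight of the proof.
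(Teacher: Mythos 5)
Your overall route is the paper's route---decompose the curvature into Lorentz components $r^t$, use the variational equations and the constraints \eqref{constr} to kill everything below the top grade, and identify the surviving $t=s_1-2$ cocycle with the Weyl tensor before restoring the $su(2,2)$ form via \eqref{expan}---but there is a genuine gap precisely at the point you flag as hard. Your central claim, that ``constraints and field equations together force every Lorentz component of $R_1$ to vanish except the top one,'' is false as stated: in the paper's proof the equations of motion $\hat\cE\wedge r^{\,t=0}=0$ leave \emph{four} irreducible components of the $t=0$ curvature alive, of types $(s+2,s-2)$, $(s,s-4)$, $(s,s)$ and $(s-2,s-2)$, and the constraint at $t=0$ removes only the first two. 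The components $(s,s)$ and $(s-2,s-2)$ are annihilated by a third ingredient your plan never invokes: projecting the Bianchi identity $\cD r^{\,t=0}+\lambda\,\sigma_0 r^{\,t=0}+\sigma_- r^{\,t=1}=0$ onto these components, where it is exactly the $\lambda\sigma_0$ term---the operator you correctly identify as the hook-specific complication---that yields $\lambda\, r^{(s,s)}=0$ and $\lambda\, r^{(s-2,s-2)}=0$. No dimension count matching the relations imposed by $\hat\Upsilon_2^+$ against extra-field components can produce these two relations, because the Bianchi identities are consequences of $D_0^2=0$ and the definition of the curvature, not of the action or of the constraints; the bookkeeping you propose (the paper does perform it, via $\#(\mathrm{kernel})=\#(\mathrm{variables})-\#(\mathrm{equations})$ with a rank-maximality check) closes the argument only once these Bianchi projections are added.

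A second, smaller omission concerns the intermediate grades: for $0<t<s_1-2$ the paper does not eliminate $r^t$ grading-by-grading through further constraints or equations of motion. Once $r^{\,t=0}=0$ is established, the Bianchi identities collapse to the cohomological equation $\sigma_- r^{\,t>0}=0$, and the conclusion---$r^t=0$ for $0<t<s_1-2$ modulo exact contributions, with the unremovable top cocycle $r^{\,t=s_1-2}=H_2{}_{\delta\rho}\,C_1^{\ldots\delta\rho}$---is a statement about $\sigma_-$-cohomology, not about counting constraints. Your final translation step (restoring the compensator contractions through \eqref{expan} and obtaining the conjugate equation by \eqref{complex}) is correct, and your diagnosis that $\sigma_0$ and the real/imaginary packaging of physical and auxiliary fields at $t=0$ are where the hook case departs from Proposition \bref{fronprop} is accurate; but to turn the proposal into a proof you must supplement it with the Bianchi-identity projections at $t=0$ and the $\sigma_-$-cohomology argument at $t>0$.
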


\begin{proof} The main idea behind the proof is to observe that both  the variational
equations of motion for the physical and auxiliary fields and constraints for extra fields
can be visualized  as a system of  linear equations imposed on curvature components.
The kernel of the linear system  should be identified with generalized Weyl
tensors so finding it is in fact the content of the above Proposition.

More precisely, Lorentz-covariant curvatures can be cast into the following form
\be
r_{\mu\nu}{}^{\alpha_1...\, \alpha_{s+t},\, \beta_1 ...\, \beta_{s-t-2}}
\Rightarrow
r^{(\delta\rho)|}{}^{\alpha_1...\, \alpha_{s+t},\, \beta_1 ...\, \beta_{s-t-2}}\;,
\ee
where antisymmetric 2-form indices were converted to symmetric spinor indices
by virtue of a $2$-from composed of the background frame field, $H_2{}_{\,(\delta\rho)} = h_\delta{}^\gamma \wedge h_{\gamma\rho}$. The tensor product
$(\delta\rho)\otimes (\alpha_1...\, \alpha_{s+t},\, \beta_1 ...\, \beta_{s-t-2})$
contains a set of irreducible Lorentz-covariant multispinor components $r^{\alpha_1 ... \alpha_k,\,\beta_1 ... \beta_l}$
for some definite integers $k,l$. Field equations
and constraints impose various linear relations on these components.

As a first step we consider the curvature $t=0$ and analyze which of its components
do not vanish on the equations of motions. The
equations of motion have the form $\hat\cE \wedge r^{\,t=0} = 0$, where $\hat\cE$ is
a 2-form  operator satisfying the conditions $[S^-,\hat\cE]=0$ and $[T^-,\hat\cE]=0$,
{\it i.e.} when acting on the Lorentz-covariant curvature it preserves its Young symmetry
and $V^{\alpha\beta}$-transversality
properties. Operator $\hat\cE$ is proportional to background frame 2-form
$H^{\alpha\beta}$ and is a differential operator
in auxiliary spinor variables.  The exact expression for $\hat \cE$ can be found in \cite{Alkalaev:2006hq}.

The explicit analysis of the component form of equations of motion  is straightforward but
technically involved to be given here in all detail. However, since we work with linear equations it is
possible
to estimate the lower bound of the respective kernel dimension just by comparing the numbers of variables and equations by
$\#({\rm kernel}) = \#({\rm variables})-\#({\rm equations})$. The explicit analysis confirms that
a rank of the linear system is maximal and the above formula is exact.

Denoting $t=0$ curvature component $r^{\alpha_1 ... \alpha_k,\,\beta_1 ... \beta_l}$ satisfying the Young symmetry and $V^{\alpha\beta}$-transversality
conditions as a pair $(k,l)$ we find that the following multispinor components of $t=0$ curvature  remain
non-zero on-shell: $(s+2,s-2)$, $(s,s-4)$ $(s,s)$, $(s-2,s-2)$.

Consider the Bianchi identities for $t=0$ curvature,
$\cD r^{\,t=0} + \lambda \,\sigma_0 r^{\,t=0} + \sigma_{-}r^{\,t=1}=0$, where $\sigma$-operators satisfy
\eqref{sigmas}. Projecting these Bianchi identities on  components $(s,s)$ and  $(s-2,s-2)$
gives rise to conditions  $\lambda\, r^{\alpha_1 ... \alpha_s,\,\beta_1 ... \beta_s}=0$
and $\lambda\, r^{\alpha_1 ... \alpha_{s-2},\,\beta_1 ... \beta_{s-2}}=0$. Note that
these components originate from the term with $\sigma_0$ operator.

Then we consider   constraints \eqref{constr} at $t=0$. They can be equivalently represented
as $1$-form taking values in $(s-1,s-3)$ multispinor corresponding to extra field with $t=1$.
It implies that by virtue of this constraint the $t=1$ extra field can be completely expressed as
the first derivative of $t=0$ field. Again, considering the constraint as a system of linear equations
on $t=0$ curvature one finds that its $(s+2,s-2)$, $(s,s-4)$ components vanish.

To summarize, we proved that equations of motion along with the first of constraints can be equivalently
rewritten as $r^{\,t=0}=0$. The rest of the proof is straightforward and reduces to the observation
that curvatures $r^{\,t>0}$ satisfy the cohomological equation $\sigma_- r^{\,t>0}=0$
as it follows form the respective Bianchi identities. Modulo exact contributions the general solution
is
\be
\ba{l}
\dps
r^{\alpha_1...\, \alpha_{s+t},\, \beta_1 ...\, \beta_{s-t-2}} = 0\;,
\qquad\qquad\;\;
0< t < s-2\;,
\\
\\
\dps
r^{\alpha_1...\, \alpha_{2s-2}} = H_2{}_{\delta\rho} C_1^{\alpha_1 ... \alpha_{2s - 2}\delta\rho}\;,
\qquad\quad
 t = s-2\;,
\ea
\ee
where $H_2{}_{\,\delta\rho} = h_\delta{}^\gamma \wedge h_{\gamma\rho}$ and totally symmetric multispinor
$C_1^{\alpha_1 ... \alpha_{2s}}$ should be identified with the
generalized Weyl tensor. The analogous expression is valid  for complex conjugated curvatures.
We see that the above expressions can be equivalently cast into the form of the Proposition \bref{hookprop}.

\end{proof}

\vspace{-3mm}

Using auxiliary variables expression (\ref{COMST1}), \eqref{COMST2}, (\ref{hookprop}) can be uniformly cast into the following
form
\be
\label{onmassH21}
R^{s_1,s_2}(a,b|x)=H_{2\,\alpha}{}^\beta \frac{\d^2}{\d
a_\alpha \d b^\beta}{\rm Res}_\mu (\mu^{2s_2}\,C^{s_1,s_2}(\mu
a+\mu^{-1}b|x))\,,
\ee

\be
\label{onmassH22}
R^*{}^{s_1,s_2}(a,b|x)=H_{2\,\alpha}{}^\beta \frac{\d^2}{\d
a_\alpha \d b^\beta}{\rm Res}_\mu (\mu^{-2s_2}\, C^*{}^{s_1,s_2}(\mu
a+\mu^{-1}b|x))\,,
\ee

\noindent where $s_2 = 0,\half,1$, and
$H_{2\,\alpha\beta}=h_{\alpha\gamma}\wedge h^\gamma{}_\beta$ and ${\rm Res}_\mu $
singles out the $\mu$-independent part of Laurent series in $\mu$.
A function of one spinor variable
\be
C(\mu a + \mu^{-1}
b) = \sum_{k,\,l} \frac{\mu^{k-l}}{k!\, l!} C^{\alpha_1 \ldots
\alpha_{k}\beta_1 \ldots \beta_{l}} a_{\alpha_1} \ldots
a_{\alpha_k} b_{\beta_1} \ldots b_{\beta_l}
\ee
has totally symmetric coefficients $C^{\alpha_1 \ldots \alpha_{k}\beta_1
\ldots \beta_{l}}$ and $b_\beta=b^\gamma V_{\gamma\beta}$.

We  observe  that generalized Weyl tensor for bosonic non-symmetric spin-$(s_1, s_2)$ field
is given by  a pair of mutually conjugated
generalized Weyl tensors for totally symmetric spin-$s_1$ field. In particular, it
implies that the number of physical degrees of freedom is twice that of
symmetric spin-$s_1$ field, cf.  \eqref{degreeB}. In the flat limit $\lambda=0$ the above
mixed-symmetry field decomposes into two independent totally symmetric spin-$s_1$ fields.
Indeed, there are no mixed-symmetry fields on Minkowski spacetime since
the respective little Wigner algebra $o(3)$ has just totally symmetric representations.
It conforms
the Brink-Metsaev-Vasiliev conjecture \cite{Brink:2000ag} asserting that an irreducible massless field in $AdS_d$
decomposes in the flat limit into a collection of irreducible massless fields in Minkowski
spacetime.\footnote{The conjecture was originally put forward in the
group-theoretical terms while its field-theoretical justification based on
unfolded formalism has been proposed in \cite{Boulanger:2008up} for $AdS_d$
mixed-symmetry fields of general shape. There, however, the proof could only be provided in full rigor
for fields up to four rows, due to technicalities in the manipulation of
so-called cell-operators. The proof of the conjecture in the general case was
given in \cite{Alkalaev:2009vm} where  BRST extension of the unfolding formalism was
used, that dispensed the authors of \cite{Alkalaev:2009vm} with an explicit manipulation of
cell-operators.} In the case of $\ads$ spacetime a set of Minkowski fields
drastically reduces so that  a non-symmetric  bosonic field  decomposes into
a pair of equal spin totally symmetric Minkowski fields \cite{Metsaev:2004ee}.

To conclude this section one should note that the naive flat limit $\lambda=0$ of the unfolded quadratic
action \eqref{act} for $\ads$ massless spin-$(s,1)$ fields is inconsistent in the sense
a half of PDoF is lost \cite{Brink:2000ag}.  However, such a type of inconsistency
may be ignored on the non-linear level since
the higher spin interaction terms contain a factor of $\lambda^{-1}$  so
the naive flat limit in the $\ads$ interacting theory is singular. This drawback could be cured
within the Stueckelberg-like approach developed for mixed-symmetry fields in
\cite{Brink:2000ag,Zinoviev:2008ve,Zinoviev:2009gh} thus allowing  one to study
consistent passings of interacting theory from $\ads$ to Minkowski spacetime.

\section{Fradkin-Linetsky superalgebra}
\label{sec: FL superalgebra}

\subsection{Higher spin superalgebra $cu(2^{\cN-1},2^{\cN-1}|8)$}
\label{sec:oscillator}

Let Grassmann even variables $a_\alpha$, $b^\beta$ with $\alpha,\beta = 1,...,4$ and
Grassmann odd variables $\psi_i$ and $\bpsi^j$
with $i,j  =1,..., \cN$  satisfy the following non-vanishing
(anti-)commutation relations
\be
\label{spa}
[a_\alpha, b^\beta]_\star = \delta_\alpha^\beta\;,
\qquad
\{\psi_i, \bpsi^j\}_\star = \delta_i^j\;,
\ee
with respect to Weyl star-product
\be
\label{sp}
(F \star G)(a,b,\psi,\bpsi)=
F(a,b,\psi,\bpsi)\:(\exp\triangle)\:G(a,b,\psi,\bpsi)\,,
\ee
where
$$
\triangle = \frac{1}{2}\left(\frac{\overleftarrow{\d}}{\d
a_\alpha}\: \frac{\overrightarrow{\d}}{\d b^\alpha}-
\frac{\overleftarrow{\d}}{\d b^\alpha}\:
\frac{\overrightarrow{\d}}{\d a_\alpha} +
\frac{\overleftarrow{\d}}{\d \psi_i}\: \frac{\overrightarrow{\d}}{\d
\bpsi^i} +\frac{\overleftarrow{\d}}{\d \bpsi^i}\:
\frac{\overrightarrow{\d}}{\d \psi_i}\right).
$$
Thus we get particular Weyl -Clifford star-product algebra with elements
$F = F(a,b,\psi, \bpsi)$ \eqref{decinAnadBpsi}. The above variables are sufficient to build basis elements of
$\cN$-extended $gl(4|\cN)$ superalgebra,
\be
\label{gl}
T_\alpha{}^\beta = \half \{a_\alpha,  b^\beta\}_\star\;,
\quad
Q^i_\alpha=a_\alpha\bpsi^i\;, \quad \bar{Q}_i^\beta= b^\beta\psi_i\;,
\quad
\dps U_i{}^j=\half \{\psi_i,\bpsi^j\}_\star\;.
\ee
Basis elements $U_i{}^j$ form $R$-symmetry algebra  $U(\cN)\subset gl(4|\cN)$.
The graded supercommutator has the standard form
$
[F\,,G\}_\star
=F \star G - (-1)^{\pi(F)\pi(G)}G \star F \,,
$
where the $Z_2$ grading $\pi$ is defined by
\be
\label{parity}
F(-a,-b,\psi,\bpsi)=(-1)^{\pi(F)}F(a,b,\psi,\bpsi)\,,\qquad
\pi(F)=\mbox{0 or 1}.
\ee
Factoring out an ideal of $gl(4|\cN)$ generated by the central
element
\be
\label{centrN}
N = a_\alpha
b^\alpha -\psi_i\bpsi^i\;
\ee
yields subalgebra $sl(4|\cN)\subset gl(4|\cN)$ and the $\ads$ superalgebra $su(2,2|\cN)$ is
defined as a real form of $sl(4|\cN)$ singled out by the reality conditions defined below.

Higher spin extension of $su(2,2|\cN)$ introduced in \cite{Fradkin:1989yd} under the name $shsc^\infty(4|\cN)$
and called $cu(2^{\cN-1},2^{\cN-1}|8)$ in \cite{Vasiliev:2001zy} is associated with the star product algebra of all polynomials
$F(a,b,\psi,\bpsi)$ satisfying the condition \cite{Fradkin:1989md,Sezgin:2001zs,Vasiliev:2001zy}
\be
\label{dpoc}
[N,F]_\star=0\;.
\ee
Thus, Fradkin-Linetsky superalgebra is spanned by star-(anti)com\-mutators of
the elements of the centralizer of $N$ in the Weyl-Clifford star product algebra.
The above commutator  can be equivalently cast into the form
\be
\label{dpoc2}
[N,F]_\star = (N_a-N_b +N_\psi-N_{\bpsi})F\;,
\ee
where $N_{a,b}$ and $N_{\psi, \bpsi}$ are Euler operators \eqref{Na},\eqref{Nb},\eqref{Npsi}.
Then condition \eqref{dpoc} is  represented as
\be
\label{dpoc3}
(N_a +N_\psi) F = (N_b+N_{\bpsi})F\;,
\ee
so it follows that an element
$F\in cu(2^{\cN-1},2^{\cN-1}|8)$ depends on equal numbers of
even and odd  variables with upper and lower indices.
Expanding out  elements $F(a,b, \psi, \bpsi)$ with respect to both even and odd
variables yields expression \eqref{decinAnadBpsi}.
From \eqref{dpoc3} it follows that total numbers of upper and
lower indices of expansion coefficients coincide.
It is worth to comment that
expression \eqref{dpoc2} is in fact an adjoint star product realization of Howe dual
$gl(1)$ basis element $Z^0$ \eqref{Z0}.

To single out an appropriate real form of the complex higher spin
algebra $cu(2^{\cN-1},2^{\cN-1}|8)$ we impose reality conditions in the following
way \cite{Vasiliev:2001zy}. Introduce an involution $\dagger$ defined by the relations
\be
\label{inv}
({a}_\alpha )^\dagger  = i b^\beta C_{\beta
\alpha}\,,\quad ({b}^\alpha )^\dagger = i C^{\alpha\beta} a_\beta
\,,
\qquad
(\psi_i)^\dagger = \bpsi^i\;,\quad
(\bpsi^i)^\dagger = \psi_i\;,
\ee
where
$C_{\alpha\beta}$ and
$C^{\alpha\beta}$ are some real antisymmetric
matrices defining complex conjugation \eqref{complex},\eqref{conjmatr}, cf. \eqref{auxinv}.
An involution reverses an order
of product factors and conjugates complex numbers
$(F\star G)^\dagger = G^\dagger\star F^\dagger$, $
(\mu F)^\dagger = \mu^* F^\dagger$, $\mu
\in {\bf C}$,
where $*$ denotes complex conjugation. The
involution $\dagger$ leaves invariant the defining star product commutation relations
(\ref{spa}) and satisfies $(\dagger
)^2 = Id$. The action (\ref{inv}) of
$\dagger$ extends to an arbitrary element $F$ of the star product
algebra.

Using the  involution
$\dagger$ enables one to define a real form of the Lie
superalgebra built by virtue of a graded commutators of elements by imposing the condition
\cite{Vasiliev:1986qx}
\be
\label{reco}
F^\dagger = -i^{\pi(F)}  F\,.
\ee
This condition defines the real higher spin algebra $cu(2^{\cN-1},2^{\cN-1}|8)$
\cite{Vasiliev:2001zy}. It contains the ${\cal N}$ extended
$AdS_5$ superalgebra $su(2,2|\cN)$ as its maximal finite-dimensional subalgebra.

\subsection{Factorized higher spin superalgebra $hu_0(2^{\cN-1},2^{\cN-1}|8)$}

Superalgebra $cu(2^{\cN-1},2^{\cN-1}|8)$ is not simple and contains
infinitely many ideals $I_{P(N)}$, where $P(N)$ is any
star-polynomial of the central element $N$, spanned by the elements of the form
$\{x\in I_{P(N)} : x=P(N)\star  F,\;\; F\in cu(2^{\cN-1},2^{\cN-1}|8)\}$
\cite{Fradkin:1989md}. There are different quotient superalgebras
\be
cu(2^{\cN-1},2^{\cN-1}|8)/I_{P(N)}\;.
\ee
In particular, one may consider
maximally factorized superalgebra with $P(N) = N$. In Ref. \cite{Vasiliev:2001zy} a real form
of this quotient algebra singled out by conditions \eqref{reco}  has
been denoted as $hu_0(2^{\cN-1},2^{\cN-1}|8)$.

We note that the element $N$ is in fact the basis element $P^+$ of $gl(2)$ algebra realized
by \eqref{fermionsl(2)} on the linear space of $cu(2^{\cN-1},2^{\cN-1}|8)$ superalgebra.
It follows that factoring out $N\equiv P^+$ leaves supertraceless elements only, \textit{i.e.},
\be
P^- F(a,b, \psi,\bpsi) = 0\;,
\ee
and therefore $hu_0(2^{\cN-1},2^{\cN-1}|8)$ superalgebra is spanned by elements
with supertraceless expansion coefficients in \eqref{decinAnadBpsi}. Put differently,
representatives of the quotient superalgebra are identified with the HW vectors
of $gl(2)$ algebra, cf. \eqref{HW}.

The quotient algebra can also be defined using  the projecting technique elaborated
in \cite{Vasiliev:2001wa,Alkalaev:2002rq,Vasiliev:2004cm}. To this end one introduces
some element $\Pi$ that satisfies the following conditions
\be
\label{DN}
\Pi \star N = N\star\Pi = 0\;,
\qquad
\Pi\star F = F\star \Pi\;,
\qquad
\forall \,F\in cu(2^{\cN-1},2^{\cN-1}|8)\;.
\ee
In particular, it implies that $\Pi$ is some function of $N$
\be
\label{M1}
\Pi = M(N)\;.
\ee
Obviously, the second condition in  \eqref{DN} is satisfied and one can explicitly check that
the first condition \eqref{DN} reduces to the following differential equation
\be
x M^{\prime\prime}(x)  - (\cN-4)M^{\prime} - 4 x M = 0\;,
\ee
where $x$ is an indeterminate variable, and $M^{\prime}(x)$, $M^{\prime\prime}(x)$
are the first and the second derivatives of $M(x)$. For $\cN\neq 4$ we obtain
that the above equation is solved by
\be
\label{formalser}
M(x) = \sum_{n=0}^\infty\, \frac{2^n}{(2n)!!\, (2n+3-\cN)!!}\, x^{2n}\;,
\ee
while for the exceptional case $\cN=4$ we find that
\be
M(x) = e^{2x}\;.
\ee
The simple form  of $\Pi$
in the case of $\cN=4$ may be traced back to that $su(2,2|\cN)$ is not simple and possesses
an additional ideal to be factored out to obtain $psu(2,2|4)$. It follows that
its higher spin extension $hu_0(8,8|8)$ is not simple as well.
We hope to consider this issue in more detail  elsewhere.

\subsection{Gauging  $cu(2^{\cN-1},2^{\cN-1}|8)$ superalgebra}
\label{sec:gauging}

The gauging procedure introduces $cu(2^{\cN-1},2^{\cN-1}|8)$ as local symmetry in the corresponding
higher spin model. According to  a general analysis of \cite{Vasiliev:1986qx} we consider basis
elements $e_I$ of Lie superalgebra $cu(2^{\cN-1},2^{\cN-1}|8)$  with definite parities $\pi(e_I) = 0,1$.
Then one defines gauge connections of $cu(2^{\cN-1},2^{\cN-1}|8)$ as 1-forms
$\Omega = dx^\mu \Omega_\mu^I \,e_I$. Their parities
coincide with those of the basis elements,  $\pi(\Omega_\mu^I) = \pi(e_I) = 0,1$.
Gauge transformation and curvature are defined in a standard fashion
\be
\label{curvatura}
R_{\mu\nu} = \d_\mu\Omega_\nu - \d_\nu\Omega_\mu  + [\Omega_\mu, \Omega_\nu]_\star\;,
\ee
and
\be
\label{gotr}
\delta \Omega_\mu = D_\mu \xi \equiv \d_\mu \xi + [\Omega_\mu, \xi]_\star\;,
\quad
\delta R_{\mu\nu} = [R_{\mu\nu}, \xi]_\star\;.
\ee
Here brackets $[\cdot, \cdot]_\star$ denote commutator and it is assumed that basis elements $e_I$
commute with gauge connections. On the other hand, gauge connections commute as
\be
\Omega_\mu^I \,\Omega_\nu^J  = (-)^{\pi(e_I)\pi(e_J)} \Omega_\nu^J\, \Omega_\mu^I\;,
\ee
in accordance with boson-fermion spin statistics. Thus we obtain that  gauge fields associated with $cu(2^{\cN-1},2^{\cN-1}|8)$ are 1-forms \eqref{sreda}
satisfying
\be
\Omega_{\mu\,\beta_1 ... \beta_n}^{\;\;\alpha_1 ... \alpha_m}\;
\Omega_{\nu\,\beta_1 ... \beta_k}^{\;\;\alpha_1 ... \alpha_l}
=(-)^{(m+n)(k+l)}\;
\Omega_{\nu\,\beta_1 ... \beta_k}^{\;\;\alpha_1 ... \alpha_l}\;
\Omega_{\mu\,\beta_1 ... \beta_n}^{\;\;\alpha_1 ... \alpha_m}\;.
\ee
$R$-symmetry algebra indices are implicit here. Let us note that constructing gauge superalgebra
$cu(2^{\cN-1},2^{\cN-1}|8)$   involves two mutually commuting Grassmann algebras, one formed by gauge connections and
another  formed by auxiliary variables  themselves.
It is worth noting that the above definition replaces a graded commutator  by usual
commutator. This happens because for $cu(2^{\cN-1},2^{\cN-1}|8)$ Lie superalgebra we chosen the so-called
first-class Grassmann shell   \cite{Berezin:1987wh} (see also \cite{Vasiliev:1986qx}).

\subsection{$\cN=2$ higher spin supermultiplets}
\label{sec:supermult}

From now on we set  $\cN=2$  and   confine ourselves to the case
of $cu(2,2|8)$ superalgebra.
Expanding out an arbitrary element of $cu(2,2|8)$ with respect
to Grassmann odd variables one obtains
\be
\label{Fgrassdec}
\ba{ccc}
F=F_{e_1} +F_{o_{11}}^i\, \psi_i + F_{o_{12}}{}_i\, \bpsi^i +
F_{e_{21}}\, (\epsilon^{mn}\psi_m \psi_n) + F_{e_{22}}\,
(\epsilon_{mn}\bpsi^m \bpsi^n) +F_{e_{31}}\, \psi_k\bpsi^k & &
\\
\\
\dps
+F_{e_{32}}{}_i{}^j\, \psi_j\bpsi^i
+ F_{o_{21}}^i\;  \psi_i (\psi_k \bpsi^k) + F_{o_{22}}{}_i \;
\bpsi^i (\psi_k\bpsi^k)
+  F_{e_4}(\psi_k\bpsi^k) (\psi_m\bpsi^m)\;. & &
\ea
\ee
Here expansion coefficients are $F_{e,\,o}  = F_{e,\,o}(a,b)$, subscripts $e$ (even) and  $o$ (odd)
indicate bosons and fermions,
while their indices enumerate different  fields of the supermultiplet.
Expansion coefficients  $F_{e_{32}}{}_j^i$ are traceless $F_{e_{32}}{}_i{}^i = 0$.
Fields $F_{e,\,o}(a,b)$ do not necessarily have
equal numbers of $a_\alpha$ and $b^\beta$, so $(N_a-N_b)F_{e,\,o}(a,b) = p \,F_{e,\,o}(a,b)$, where
$p=0,1,2$.

Expanding out $F(a,b)$ in  $a_\alpha$ and $b^\beta$
yields traceful  coefficients, \textit{i.e.},
$F_{\;\beta_1 ... \beta_m\gamma}^{\alpha_1 ... \alpha_n\gamma}=0$, and therefore
they decompose into a collection of traceless components.
Namely, for any fixed $n$ and $m$, a multispinor
$F_{\beta_1 ... \beta_m}^{\alpha_1 ... \alpha_n}$ decomposes into the set of
irreducible traceless components
$F^{\prime\,\alpha_1 ... \alpha_k}_{\;\;\beta_1 ... \beta_k}\,,$
with all $k+l\leq m+n$, $k-l = m-n$, $k\geq 0$, $l\geq 0$.

It follows from \eqref{Fgrassdec} that the spectrum  of $cu(2,2|8)$ gauge fields
is represented by the following sum
\be
\label{Vs}
\Omega =:
\sum_{k=0}^{\infty}\sum_{s=2}^{\infty}\;D^{(k)}_{[1]}(s) \oplus {D}^{(k)}_{[2]}(s-\frac{1}{2})\oplus
D^{(k)}_{[4]}(s-1) \oplus D^{(k)}_{[1]}(s-1,1)
\oplus
{D}^{(k)}_{[2]}(s-\frac{3}{2})\oplus D^{(k)}_{[1]}(s-2)\,,
\ee
where $D^{(k)}(s_1, s_2)$ denotes a $k$-th copy of spin-$(s_1, s_2)$ unitary irreducible
representation of $su(2,2)$ \eqref{degreeB}.
Numbers in square brackets denote dimensions of $R$-symmetry algebra $u(2)$ representations. We note that
the difference between highest and lowest spins in a supermultiplets equals 2 and highest spin field in
the supermultiplet is always bosonic.
Using formula \eqref{degreeB} one can explicitly verify a balance of bosonic and
fermionic degrees of freedom.

By way of an example let us consider $s=2$ (graviton)
supermultiplet. Modulo infinite degeneracy its field content is
given by $(2_{[1]},\frac{3}{2}_{[2]}, 1_{[4]}, (1,1)_{[1]},
\frac{1}{2}_{[2]},0)$. We stress that $(1,1)_{[1]}$
representation corresponds to massive not massless antisymmetric
field $B_{\mu\nu}$ \cite{Romans:1985ps}. Spin $s=3$ (hypergraviton) supermultiplet is given by
$(3_{[1]},\frac{5}{2}_{[2]}, 2_{[4]}, (2,1)_{[1]},
\frac{3}{2}_{[2]},1_{[1]})$. It is this supermultiplet where a "hook" field appears
for the first time. It is worth to comment that $\cN=3$ supermultiplet
contains the same spin fields as $\cN=2$ supermultiplet
but there appears also a fermionic "hook" field. Spin-$(s,2)$
field appears in $\cN=4$ supermultiplet. Generally,
it follows from \eqref{dpoc2} that a value of the second spin is given  by $s_2 \leq \cN/2$.

\section{A general view of  FV-type action}

\label{sec:reviewFV}

For the analysis of interactions we use perturbation expansion with the dynamical fields
$\Omega_1$ treated as fluctuations above
the $\ads$ background
\be
\label{go01}
\Omega = \Omega_0 +\Omega_1 \,,
\ee
where  vacuum gauge fields
$\Omega_0$ satisfy  the zero-curvature condition \eqref{zerocurv}.
Both gauge transformations and non-linear curvatures are given by formulas
\eqref{curvatura} and \eqref{gotr}.
Since $R(\Omega_0)=0$, we
have $R=R_1 +R_2\,,$ where
\be
\label{R11}
R_1 = d\Omega_1
+\Omega_0 \star\wedge \Omega_1 +  \Omega_1\star\wedge
\Omega_0\,,
\qquad
R_2 = \Omega_1 \star \wedge  \Omega_1\,.
\ee
It follows that  linearized curvatures $R_1$ are of the first order in fluctuations  while
$R_2$ contain their quadratic combinations. Gauge transformations for the first order
fields are given by
\be
\label{firstorder}
\delta \Omega_1 = D_0 \xi + [\Omega_1,\xi ]_\star\;,
\qquad
\delta R_1 = [R_1, \xi]_\star\;.
\ee
Let us note that the lowest order part of the above gauge transformation
has the form \eqref{hstr}, \eqref{curvinar}.

Higher spin gravitational interactions in the cubic approximations
can be described by FV-type action functional
\be
\label{acta}
S(\Omega)=\frac{1}{2}{\cal A}\big(R(\Omega),R(\Omega)\big)\,,
\ee
where $R(\Omega)$ are 2-form curvatures  associated to gauge fields of higher spin
superalgebra. ${\cal A}(F,G)={\cal A}(G,F)$ is a bilinear symmetric inner product of the type \eqref{actschem}
defined for any differential $2$-forms $F$ and $G$
(for more details see \cite{Vasiliev:2001wa,Alkalaev:2002rq,Alkalaev:2007bq,Sorokin:2008tf}).

It is important that the above action is to be supplemented by off-shell
constraints \eqref{constr},
\be
\label{offshellconstr}
\hat \Upsilon(R_1) = 0\;.
\ee
In other words, to maintain gauge invariance of the action in
the cubic approximation one has to add constraints which are some linear combinations
of the linearized higher spin curvatures. The constraints express all extra fields via derivatives
of physical fields as in \eqref{coco}.

Before explicitly constructing cubic order theory for $\ads$ higher spin fields it will be useful to consider the
general scheme
of how to prove establish gauge invariance of the FV-type coupling. For a more detailed discussion see
\cite{Fradkin:1987ks,Fradkin:1986qy,Vasiliev:2001wa,Alkalaev:2002rq}.
The gauge invariance of the action can be achieved by attributing to fields $\Omega_1$ a
suitable transformation law. Indeed, the action can be made invariant provided $\Omega_1$
transform as
\be
\label{deformedtransfo}
\delta\Omega_1  = D \xi + \Delta(R, \xi)\;,
\ee
where $\Delta(R, \xi)$ denotes some $R$-dependent deformation of the
original transformation law  \eqref{gotr} such that $\Delta(0,\xi) = 0$. These deformations are the so-called improved diffeomorphisms
which are  intrinsic to all theories containing propagating gravity \cite{Jackiw:1978ar}.
In what follows we denote the undeformed transformation \eqref{gotr}
as $\delta^{alg} \Omega_1$ thus emphasizing its origin in $cu(2,2|\,8)$ superalgebra.

Within the perturbation scheme both the action and the gauge transformations are expanded as
\be
\ba{l}
\dps
S(\Omega_1) = S_2(\Omega_1) + S_3(\Omega_1) + ... \;,
\\
\\
\dps
\;\;\delta \Omega_1 = \delta_0 \Omega_1 + \delta_1 \Omega_1 + ... \;.
\ea
\ee
Here zeroth order transformation $\delta_0 \Omega_1$ is given by expression \eqref{hstr}. Since quadratic action
is invariant under linearized transformations, $\delta_0 S_2 = 0$, it follows that the action in the cubic approximation
stays invariant against deformed transformations \eqref{deformedtransfo} if
\be
\label{fg}
\delta^{alg} S + \Delta S_2 +... = 0\;,
\ee
where the dots stand for higher order corrections $\cO(\Omega_1^3\xi)$.
Recalling that the quadratic action does not depend
on extra fields and auxiliary fields are expressed via derivatives of physical fields, one obtains
$\Delta S_2 =\dps \frac{\delta S_2}{\delta \omega^0}\Delta \omega^0$, where $\omega^0$ denote
physical
fields.\footnote{Recall that the physical field $\omega^0$ is the Lorentz field $\omega^t$
\eqref{omegas} at $t=0$ and for hook fields it is identified with $\text{Re}\, \omega^0$, see
the discussion in the end of Section \bref{sec:gaugefieldsaux}. } Let us note that
both $\dps \frac{\delta S_2}{\delta \omega^0}$
and deformation $\Delta$ are
proportional to
linearized curvature $R_1$. According to \eqref{fg}  a deformation of the
original gauge transformation \eqref{deformedtransfo} guaranteeing  the cubic order
gauge invariance of the action
does exists provided that $\delta^{alg} S$ is a definite bilinear combination of
curvatures and the gauge parameter $\xi$, \textit{i.e.},
\be
\delta^{alg} S \sim R_1\, R_1\, \xi+... \;.
\ee
We observe that up to higher order corrections  $\delta^{alg} S$ vanishes
provided that free field equations are fulfilled, $\dps \frac{\delta S_2}{\delta \omega^0}=0$.
Using constraints \eqref{offshellconstr} and Propositions \bref{fronprop} and \bref{hookprop} one
reformulates the
gauge invariance condition in the cubic approximation as follows
\be
\label{condinvariance}
\delta^{alg} S\, \Big|_{R_1 = C} = 0\;,
\ee
where $C$ are generalized Weyl tensors.
In particular, fulfilling the invariance condition
\eqref{condinvariance} guarantees the existence of an appropriate deformation $\Delta$
of the algebraic gauge transformation law for the physical field.

Note that algebraic gauge variations of auxiliary and extra fields are also deformed
but these corrections are irrelevant for the action variation  in the cubic approximation.
Indeed, auxiliary and extra fields contribute both to the cubic action and to
constraints \eqref{offshellconstr} but due to  the extra field decoupling condition  they enter the
 action only in trilinear combinations $\Omega_1 \Omega_1 \Omega_1$. The cubic
approximation variation of the action  is given by bilinear combinations
$\Omega_1\Omega_1$. It immediately follows that first order corrections of the gauge
transformation law for  auxiliary and extra fields are irrelevant in the gauge variation
of the action and it is sufficient to know just their zeroth order part.
On the other hand, because linearized curvatures $R_1$ transform homogeneously
\eqref{firstorder} the gauge variation of  constraints \eqref{offshellconstr} is of the
first order in $\Omega_1$. Therefore,  to maintain gauge invariance of the constraints
one deforms extra field  gauge transformations  by
terms linear in $\Omega_1$. However, they do not contribute  to the variation
of the action.

The above consideration provides  a general scheme  of how to achieve a gauge invariance in
FV-type theories. However,
higher spin  models in question possess several peculiar features as local
supersymmetry and an infinite degeneracy of the
spectrum. It follows that the action should fulfill  additional conditions.

\begin{itemize}

\item \textit{$R$-symmetry invariance.} The $\cN=2$ superalgebra $cu(2,2\,|8)$ is invariant under
global $u(2)$ rotations of
supercharges \eqref{gl}. Therefore, a corresponding  field  theory should also exhibit such a global
symmetry, referred to as $R$-symmetry.

\item \textit{Factorization condition.} Superalgebra $cu(2,2|8)$ gives rise
to an infinite set of copies for a given spin field.  The factorization
condition diagonalizes a quadratic part of
the action \eqref{acta} so that different copies of the same spin field do not mix up
in the quadratic action.

\item \textit{$C$-invariance condition.} The action possesses a cyclic property with respect to the central element $N$
of $cu(2,2|\,8)$ superalgebra,
\be
\label{Cinv}
\cA(N\star F,G) = \cA(F, G\star N)\;,
\ee
where $F,G$ are $cu(2,2|\,8)$ elements   and hence they commute with $N$ \eqref{dpoc}.
\end{itemize}

\noindent In the subsequent sections we  consider each of the above conditions.
Note that the factorization condition and the $C$-invariance condition were originally
formulated in \cite{Vasiliev:2001wa} for pure bosonic theory while
their $\cN=1$ extension
was considered in \cite{Alkalaev:2002rq}.

The full action \eqref{acta} is naturally split into a sum of bosonic and fermionic parts
\be
\label{act34}
\dps
{\cal A}(F,G)= {\cal
B}(F_{e},G_{e})+ {\cal F}(F_{o},G_{o})\;,
\ee
where subscripts $e$ (even) and $o$ (odd) indicate bosonic and fermionic components of $F$ and $G$, while
$\cB$ and $\cF$ are bosonic and fermionic actions, respectively.

The fermionic part  is a sum of actions for two $su(2)$-valued totally symmetric fermions
\be
\label{fermact1}
\dps {\cal F}(F_{o},G_{o}) = \cF_1 (F_{o_1},G_{o_1}) +\cF_2(F_{o_2},G_{o_2})\;,
\ee
where
\be
{\cal F}_1(F_{o_1},G_{o_1})=\frac{1}{2}\int \hat{H}_{o_1}\wedge
G_{o_{12}}{}_i\, \wedge F_{o_{11}}^i
+\frac{1}{2}\int \hat{H}_{o_1}\wedge F_{o_{12}}{}_i\,\wedge G_{o_{11}}^i
\ee
\be
{\cal F}_2(F_{o_2},G_{o_2})=\frac{1}{2}\int \hat{H}_{o_2}\wedge
G_{o_{22}}{}_i\,\wedge F_{o_{21}}^i
+\frac{1}{2}\int \hat{H}_{o_2}\wedge F_{o_{22}}{}_i\,\wedge G_{o_{21}}^i\,.
\ee

The bosonic part is a sum of  actions for five $u(2)$-valued bosonic fields
\be
\label{bosact}
\ba{ccc}
\dps
{\cal B}(F_{e},G_{e})
=
\cB_1(F_{e_1},G_{e_1})
+{\cal B}_{31}(F_{e_{31}},G_{e_{31}}) & &
\\
\\
& & \hspace{-6.4cm} +\,{\cal B}_{32}(F_{e_{32}},G_{e_{32}})
+{\cal B}_4(F_{e_4},G_{e_4})
+{\cal B}_2(F_{e_2},G_{e_2})\;,
\ea
\ee
where each term is defined as follows. Actions for totally symmetric
fields are
\be
\ba{c}
\dps
\qquad\cB_1(F_{e_1},G_{e_1})= \int \hat{H}_{e_1}\wedge F_{e_1}\,\wedge G_{e_1}\,,
\qquad
\;\;\;\;{\cal B}_{31}(F_{e_{31}},G_{e_{31}})= \int \hat{H}_{e_{31}}\wedge F_{e
_{31}}\,\wedge G_{e_{31}}\,,
\\
\\
\dps
{\cal B}_{32}(F_{e_{32}},G_{e_{32}})= \int \hat{H}_{e_{31}}\wedge F_{e_{31}}{}_i{}^j\,\wedge G_{e_{31}}{}_j{}^i\,,
\qquad
{\cal B}_4(F_{e_4},G_{e_4})= \int \hat{H}_{e_4}\wedge F_{e_4}\,\wedge G_{e_4}\,,
\ea
\ee
while the  action for  non-symmetric fields is
\be
\label{B2}
{\cal B}_2(F_{e_2},G_{e_2})= \half \int \hat{H}_{e_2}\wedge F_{e_{22}}\,\wedge G_{e_{21}}
+
\half \int \hat{H}_{e_2}\wedge G_{e_{22}}\,\wedge F_{e_{21}}\,.
\ee
From now on the symbol of exterior product $\wedge$ will be systematically
omitted. By construction, all the above actions are invariant under $R$-symmetry transformations $u(2)$.
They are of the type \eqref{act} defined by operators
$\hat H_{e,\,o}=\hat H_{e,\,o} (E)$  \eqref{H} which depend on dynamical gravitation field
described by the frame $E^{\alpha\beta}$. To construct the cubic order action
we use the following anzats for operators $\hat H_{e,\,o}=\hat H_{e,\,o} (E)$.
Namely, we set a part of coefficients or their linear combinations to zero
\be
\label{coefother1}
\beta_{e}(p,q) = 0\;,
\qquad
\zeta_{e}(p,q) = -\gamma_{e}(p,q)\;,
\ee
for spin-$s$ bosonic fields, and
\be
\label{coefother2}
\beta_{e,o}(p,q) = 0\;,
\qquad
\gamma_{e,o}(p,q) = 0\;,
\ee
for spin-$s$ fermionic fields and spin-$(s,1)$ bosonic fields. Note that the above choice is consistent
with the quadratic action coefficients \eqref{K1}-\eqref{K3}.

It is important to comment that
describing gauge fields as differential forms and using the compensator mechanism
that makes $su(2,2)$ symmetry manifest guarantees that the full
action \eqref{act34} is explicitly $su(2,2)$ covariant and diffeomorphism
invariant. Note that we treat gravitational fields appearing in the full action in two different setups, as the frame
field $E^{\alpha\beta}$ that explicitly enters operators $\hat H_{e,\,o}=\hat H_{e,\,o} (E)$
and gauge connection $\Omega^{\alpha\beta}$ of $su(2,2)\subset cu(2,2|8)$.
As a result, the gauge variation  $\delta^{alg}S$ of the full action \eqref{act34}
involves two types of terms resulting from varying operators
$\hat H_{e,\,o} (E)$ and curvatures $R(\Omega_1)$. The invariance of the first
type results from the explicit $su(2,2)$ covariance and diffeomorphism
invariance of the whole setup. The invariance of the second type gives rise to the
condition \eqref{condinvariance} which now takes the form
\be
\label{skoooro}
\cA(R_1, [R_1, \xi]_\star) \approx 0\;.
\ee
here $\cA$ is given by \eqref{act34} and  $\approx$ means that all linearized
curvatures $R_1$ are replaced by generalized Weyl tensors according to
Propositions \bref{fronprop} and \bref{hookprop}. Gauge parameter $\xi\in cu(2,2|8)$
is arbitrary.

The above discussion of the gauge invariance in the cubic approximation
is valid for a higher spin  model with $cu(2,2|8)$ local symmetry  but the same
methods are also  applied for a reduced system governed by
factorized superalgebra $hu_0(2,2|\,8)$. To build a reduced model we use
the approach elaborated for  $\cN=0$ pure bosonic system in
\cite{Vasiliev:2001wa} and for $\cN=1$  system in \cite{Alkalaev:2002rq}
which consists of inserting the projecting operator  $\Pi$ \eqref{DN}
into the action of  $cu(2,2|8)$ system as
\be
\label{bformc}
{\cal A}(F,G) \to {\cal A}_0 (F,G)=  {\cal A}(F,\Pi \star G)\,,
\ee
where ${\cal A}(F,G)$ is given by  \eqref{act34}. Then ${\cal A}_0 (F,G)$ defines
an action of the reduced model.
Because  the projecting operator $\Pi(N)$ is some fixed function of $N$ \eqref{formalser} it follows
that the $C$-invariance condition guarantees
\be
{\cal A}(F, \Pi \star G)={\cal A}(F\star\Pi , G)\,,
\ee
so the bilinear form in the action with $\Pi$ inserted remains symmetric.
The idea is that all terms in $F$ and $G$ proportional to $N$ do not contribute to
the action (\ref{bformc}) which therefore is defined on the
quotient subalgebra $hu_0(2,2|8)$. Note that ${\cal A}_0(F,G)$  is well-defined as a functional of
polynomial functions $F$ and $G$ because for polynomial $F$ and
$G$ only a finite number of terms in the expansion of $\Pi$ in
auxiliary variables contributes. The explicit expression for $\cN=2$ projecting
operator  $\Pi$ is given in Section \bref{sec:summary}.

\subsection{Summary of results}
\label{sec:summary}

In this section we list all the coefficients in the action for $cu(2,2|8)$ model.
\begin{itemize}
\item  Spin-$(s_1,0)$ sector is given by
\be
\ba{c}
\dps
\alpha_{e_1}(p,q) = 2 \gamma_{e_1} (p+q)
- \half\Phi_0 \int^1_0 d\tau \, {\rm Res}_\nu  \nu e^{\half
(-\nu^{-1} +\nu(\tau p +q)) }
\\
\\
\dps
\gamma_{e_1}(p)= -\zeta_{e_1}(p) = -\frac{\Phi_{0}}{4}\int^1_0 d\tau \tau \, {\rm Res}_\nu  \nu e^{\half
(-\nu^{-1} +\nu \tau p )}
\ea
\ee
\be
\ba{c}
\dps
\alpha_{e_{31}}(p,q) = \half \alpha_{e_1}(p,q)
\qquad
\gamma_{e_{31}}(p) =\half \gamma_{e_{1}}(p)
\\
\\
\dps
\alpha_{e_{4}}(p,q) = \frac{1}{4} \alpha_{e_1}(p,q)
\qquad
\gamma_{e_{4}}(p) =\frac{1}{4}\gamma_{e_{1}}(p)

\ea
\ee
\be
\ba{c}
\dps
\alpha_{e_{32}}(p,q) = 2 \gamma_{e_{32}} (p+q)
- \frac{1}{8}\Phi_{0} \int^1_0 d\tau \, {\rm Res}_\nu \nu e^{\half
(-\nu^{-1} +\nu(\tau p +q)) }
\\
\\
\dps
\gamma_{e_{32}}(p) = - \zeta_{e_{32}}(p)= -\frac{\Phi_{0}}{16}\int^1_0 d\tau \tau \, {\rm Res}_\nu
\nu e^{\half (-\nu^{-1} +\nu \tau p )}
\ea
\ee
According to \eqref{coefother1} all coefficients $\beta_{e_1}(p,q) =\beta_{e_{31}}(p,q)=\beta_{e_{32}}(p,q)=\beta_{e_4}(p,q) =0$.

\item Spin-$(s_1,1)$ sector is given by
\be
\ba{c}
\dps
\alpha_{e_2}(p,q) =\zeta_{e_2}(p,q)+\frac{\Phi_{0}}{q}\, \int_0^1 d \tau {\rm Res}_\nu \nu^{-1}e^{\half(-\nu^{-1}+\nu (\tau p+q))}
\\
\\
\dps
\zeta_{e_2}(p,q) = -\frac{\Phi_{0}}{q(p+q)}\,
\int_0^1 d \tau {\rm Res}_\nu \nu^{-1}e^{\half(-\nu^{-1}+\nu \tau (p+q))}
\ea
\ee
According to \eqref{coefother2} coefficients $\beta_{e_2}(p,q) = 0$ and $\gamma_{e_{2}}(p,q) = 0$.

\item Spin-$(s_1,\half)$ sector is given by
\be
\ba{c}
\dps
\alpha_{o_1}(p,q)=\zeta_{o_1}(p,q)+
\frac{\Phi_{0}}{2q}\int_0^1 d\tau \,{\rm Res}_\nu\,
e^{\frac{1}{2}(-\nu^{-1}+\nu (p \tau+q))}
\\
\\
\dps
\zeta_{o_1}(p,q)=-\frac{\Phi_{0}}{2q(p+q)}\int_0^1 d\tau \,{\rm Res}_\nu\,
e^{\frac{1}{2}(-\nu^{-1}+\nu (p+q) \tau)}
\ea
\ee
\be
\alpha_{o_2}(p,q) = \frac{1}{4}\alpha_{o_1}(p,q)
\qquad
\zeta_{o_2}(p,q) = \frac{1}{4}\zeta_{o_1}(p,q)
\ee

According to \eqref{coefother2}  coefficients $\beta_{o_1}(p,q)=\beta_{o_2}(p,q) = 0$ and
$\gamma_{o_1}(p,q)=\gamma_{o_2}(p,q)=0$.

\end{itemize}

\noindent Here  $\Phi_0$ is an arbitrary  factor
properly normalized in terms of the
cosmological constant $\lambda$ and the gravitational constant $\kappa$.

The action of the reduced $hu(2,2|8)$ model  is defined according to
\eqref{bformc}, where the form of the projecting operator is read off from the general
expression \eqref{formalser} at $\cN=2$
\be
\Pi(N)= \sum_{n=0}^\infty\, \frac{2^n}{(2n)!!\, (2n+1)!!}\, N^{2n}\;.
\ee


\section{Calculation of gauge invariance}
\label{sec:calculations}

The novel feature of $\cN=2$ analysis compared to $\cN=0,1$ case is the
appearance of "hook" fields. In this section we study the invariance condition \eqref{skoooro}
giving particular emphasis to calculations involving   fields of "hook" symmetry type.
Our analysis of the cubic order interaction vertices  is heavily based on the technique elaborated in
the previous
papers on $\cN = 0,1$ FV-type theory \cite{Vasiliev:2001wa,Alkalaev:2002rq}.
In particular, we do not repeat here  calculations related to
totally symmetric fields and use results obtained in \cite{Vasiliev:2001wa,Alkalaev:2002rq}.

\subsection{Factorization condition for "hook" fields}
\label{sec:factorization}

We begin by noting that due to (super)traces of $cu(2,2|\,8)$ gauge fields
$su(2,2|2)$ supermultiplets
are not irreducible and decompose into (super)traceless components (see Section \bref{sec:supermult}).
Having in mind \eqref{HW} we call a gauge field  $\Omega(a,b,\psi,\bpsi|x)$ supertraceless if it fulfills the following algebraic constraint
\be
\label{STR}
P^-\, \Omega(a,b,\psi,\bpsi|x) =0\;,
\ee
where operator $P^-$ is given by \eqref{fermionsl(2)}.
It follows that using operators $P^-$ and $P^+$ allows one to decompose
any element $\Omega(a,b,\psi,\bpsi|x)$ of $cu(2,2|8)$ superalgebra into irreducible
$su(2,2|2)$ supermultiplets as
\be
\label{STRD}
\Omega(a,b,\psi,\bpsi|x) =
\sum_{k=0}^{\infty}\sum_{s_1=2}^\infty\,\chi(k,\,s_1)\,(P^+)^k
\;\Omega^{k,\,s_1}(a,b,\psi,\bpsi|x)\,,
\ee
where $\chi(k,\,s_1)$ are  arbitrary coefficients, $s_1$ denotes the
highest integer spin in a
supermultiplet and $\Omega^{k,s_1}$ are
supertraceless \eqref{STR}.
The supertraceless decomposition  can be equivalently rewritten (modulo finite field redefinitions)
in the manifest $su(2,2)$ fashion with all multispinors being traceless rather than
supertraceless
\be
\label{tr1}
\Omega(a,b|x) = \sum_{k=0}^{\infty}\sum_{s_1=2}^\infty\,
v(n,s_1)\;(T^+)^n \;\Omega_{}^{n,s_1}(a,b|x)\;,
\ee
where $v(n,s_1)$ are  arbitrary  coefficients
and $\Omega_{}^{n,s_1}(a,b|x)$ describe an $n$-th copy of irreducible
field of a given spin $(s_1,s_2)$ \eqref{IrreducibleSpin}. Note that $s_2 = 0,\frac{1}{2},1$
is implicit in the above decompositions. The decomposition analogous to \eqref{tr1}
is valid for the curvatures
\be
\label{tr32}
R(a,b|x) = \sum_{n,\,s_1=0}^{\infty} \,
v(n,s_1)\;(T^+)^n \;R_{}^{n,s_1}(a,b|x)\;,
\ee
where $R_{}^{n,s_1}(a,b|x)$ are associated with
irreducible fields $\Omega_{}^{n,s_1}(a,b|x)$.

The factorization condition requires
\be
\label{fc}
S_2(\Omega) = \sum_{n=0}^\infty\sum_{s_1=2}^{\infty} \,\sum_{s_2}\, S_2^{n,s_1,s_2}
(\Omega_{}^{n,s_1+2,s_2})\;,
\ee
where $S_2$ is a quadratic part of \eqref{acta} and $S_2^{n,s_1,s_2}$ is
a quadratic action
for a $n$-th copy of a given spin field (recall that it may take values in $u(2)$ irreps). The condition diagonalizes $S_2$,
\textit{i.e.} the terms containing products of the fields  $\Omega^{n,\,s_1,s_2}$ and
$\Omega^{m,\,s_1,s_2}$ with $n\neq m$ in the trace decomposition
(\ref{tr1}) should all vanish. Note that normalization coefficients $v_{n}(T^0)$
in expansion \eqref{tr1} can be chosen in such a way that all copies of the same spin in the  quadratic
actions enter \eqref{fc} with the same overall factor.
The factorization condition for totally symmetric fields has been
explicitly calculated in Refs. \cite{Vasiliev:2001wa,Alkalaev:2002rq}. In this section
we perform  the analogous analysis for "hook" fields.

From the above discussion it follows that the factorization condition
in the spin-$(s_1,1)$ sector is valid provided that
\be
\label{fact0}
\cB_2(F_{e_2},T^+ G_{e_2})
=
\tilde{\cB}_2(T^- F_{e_2},G_{e_2})\;,
\ee
where action $\tilde{\cB_2}$ is defined for some set of new parameters
$(\tilde \alpha_{e_2}, \tilde \zeta_{e_2})$ expressed in terms of old
parameters $(\alpha_{e_2}, \zeta_{e_2})$, see \eqref{H} and \eqref{K3}.
Then one finds that two actions differ from each other  by the following term
\be
\int Q_{e_2}(p,q) E_\alpha{}^\beta\frac{\d^2}{\d a_{2\alpha}\d b_2^\beta}({ c}_{12})^2 F_{e_{21}}(a_1, b_1) G_{e_{22}}(a_2, b_2)\;,
\ee
which is required to vanish,
\be
\label{Q=0}
Q_{e_2}(p,q)\equiv  (1+p\frac{\d}{\d p})\,\alpha_{e_2}(p,q)+(1+q\frac{\d}{\d q})\,\zeta_{e_2}(p,q) =0\;.
\ee
The new coefficients are expressed through the old ones as follows
\be
\label{tildecoef}
\ba{l}
\dps
\tilde \alpha_{e_2}(p,q) = 4\left((2+p\frac{\d}{\d p})\frac{\d}{\d p}+
(3+q\frac{\d}{\d q})\frac{\d}{\d q}\right)
\alpha_{e_2}(p,q)\,,
\\
\\
\dps
\tilde \zeta_{e_2}(p,q) = 4\left(\frac{2}{q}+(1+p\frac{\d}{\d p})\frac{\d}{\d p}+
(4+q\frac{\d}{\d q})\frac{\d}{\d q}\right)
\zeta_{e_2}(p,q)\,.
\ea
\ee
They will be further constrained by the $C$-invariance condition discussed below.
One can show that the factorization condition \eqref{Q=0} and the extra field
decoupling condition \eqref{exdc} are compatible and the solution is
given by \eqref{K3}. Quite analogously one considers totally symmetric fields
and proves that the coefficients are fixed by the factorization and extra field
decoupling conditions as in \eqref{K1} and \eqref{K2}, see \cite{Vasiliev:2001wa,Alkalaev:2002rq}.

\subsection{The $C$-invariance condition}

Let us discuss  the $C$-invariance condition \eqref{Cinv}.
The exact formula for $N\star F$ reads
\be
\ba{l}
\dps
N\star F =(T^+-T^-)F - F_{e_1}\,(\psi_k\bpsi^k)
- F_{o_{11}}^i\,\psi_i\,(\psi_k\bpsi^k)
- F_{o_{12}}{}_i\,\bpsi^i\, (\psi_k\bpsi^k)
\\
\\
\dps
- F_{e_{31}}\,(\psi_k\bpsi^k)\,(\psi_m\bpsi^m) - \frac{1}{2}F_{e_{31}}
- \frac{1}{4}F_{o_{21}}^i\,\psi_i
- \frac{1}{4}F_{o_{22}}{}_i\,\bpsi^i
- \frac{1}{2}F_{e_4}\,(\psi_m\bpsi^m)\;.
\ea
\ee
where $F$ is given by \eqref{Fgrassdec}. Imposing the $C$-invariance condition results in the
mutual conjugation
of the  trace creation operator $T^+$ and trace annihilation operator $T^-$
with respect to the inner product $\cA$:
\be
\label{conjugat}
\cA(T^{\pm}F, G) = - \cA(F, T^\mp G)\;,
\ee
while the relative coefficients between different type actions are fixed as
\be
\label{Cinvnumbers}
\cB_1 = 2\cB_{31}  \;,
\qquad
\cB_1 = 4\cB_{4}  \;,
\qquad
\cF_1 = 4\cF_{2} \;.
\ee

In particular, condition \eqref{conjugat} implemented in the "hook" field
sector
along with the factorization condition yields
additional relations for coefficients \eqref{tildecoef},
\be
\label{newold}
\alpha(p,q)+ \tilde\alpha(p,q) = 0\;,
\qquad
\zeta(p,q)+ \tilde\zeta(p,q) = 0\;.
\ee
It is worth noting that the factorization condition is implemented on the free field
level only while the C-invariance conditions is valid for the non-linear action as well. In particular,
we see that  condition \eqref{conjugat} for free fields is a stronger version of the
factorization condition. Also,  conditions \eqref{Cinvnumbers} for free fields
are too restrictive because  they relate  normalization constants
in front of different spin quadratic actions.

The $C$-invariance condition  also implies that it is
sufficient to consider the invariance condition \eqref{condinvariance} only for the fields
satisfying  the tracelessness condition \eqref{HWgauge}. Because curvatures decompose
into traceless components as \eqref{tr32} we single out the zeroth order terms in $T^+$
and denote them as
\be
\cR(a,b|x) \equiv \sum_{s_1=2}^\infty \cR^{s_1}(a,b|x)
\ee
By definition, each term in this expansion is traceless, $T^- \cR^{s_1} = 0$.
Recall that both the second spin value  $s_2 = 0,\frac{1}{2},1$ and $u(2)$ indices are
implicit here. One may explicitly prove that the invariance
condition \eqref{condinvariance} is now takes the form
\be
\label{matt}
\cA(\cR, [\cR, \xi]_\star) \approx 0\;,
\ee
where $\approx$ means that all linearized curvatures are replaced by generalized
Weyl tensors according to
Propositions \bref{fronprop} and \bref{hookprop}. The idea of the proof is to consider
the variation $\cA(R,[R,\xi]_\star)$ with curvatures decomposed according to
the trace decomposition \eqref{tr32}. Then using formula \cite{Vasiliev:2001wa}
\be
\label{TT}
T^+ F(a,b) = T^+\star F(a,b) + \big(T^- - \half\, G^0\big)F(a,b)\;,
\ee
where $T^\pm$ and $G^0$ are given by \eqref{T}, \eqref{G0}, along with the
$C$-invariance condition in the form \eqref{conjugat} enables  one reduce step by step
a degree in $T^+$ thereby ending up with pure traceless curvatures $\cR$ and new
gauge parameter $\xi \rightarrow T^+\star \xi$. More detailed exposition can be
found in \cite{Vasiliev:2001wa,Alkalaev:2002rq}.

\subsection{Cubic order gauge invariance}
\label{sec:}

Gauge transformations of $cu(2,2|\,8)$ superalgebra are defined
by $0$-form parameter $\xi= \xi(a,b,\psi,\bpsi|x)$ expanded out analogously to \eqref{Fgrassdec},
\be
\label{gaugeFgrassdec}
\ba{c}
\xi =\xi_{e_1}  +\xi_{o_{11}}^i\, \psi_i + \xi_{o_{12}}{}_i\, \bpsi^i
+ \xi_{e_{21}}\, (\epsilon^{mn}\psi_m \psi_n)
+ \xi_{e_{22}}\, (\epsilon_{mn}\bpsi^m \bpsi^n)
+\xi_{e_{31}}\, \psi_k\bpsi^k
\\
\\
\dps
+\xi_{e_{32}}{}_j^i\, \psi_i\bpsi^j
+ \xi_{o_{21}}^i\;  \psi_i (\psi_k \bpsi^k) + \xi_{o_{22}}{}_i \;
\bpsi^i (\psi_k\bpsi^k)
+  \xi_{e_4}(\psi_k\bpsi^k) (\psi_m\bpsi^m)\;.
\ea
\ee
Because  the curvatures $R(a,b, \psi, \bpsi|x)$ are transformed homogeneously  \eqref{gotr} it follows that the component
form of $\delta R(a,b, \psi, \bpsi|x)$ comprises over a hundred  terms.
In what follows we consider invariance with respect to each type of gauge transformations associated
with supermultiplet parameters \eqref{gaugeFgrassdec}, but explicit calculations are too
lengthy to present them here. Instead, we explicitly analyze the invariance with
respect to bosonic symmetry defined by $\xi_{e_1}$, while the rest of gauge invariance analysis is given
schematically just emphasizing  key points. Explicit expressions for gauge transformations
are relegated to Appendix \bref{sec:A}.

\subsubsection{Cubic order  invariance for "hook" fields}
\label{sec: main_bosonic_sym}

In this section we study the gauge invariance with respect to bosonic parameter
$\xi_{e_1} = \xi_{e_1}(a,b)$ in the "hook" field sector.
Let us note that the respective symmetry does not mix different type fields, see \eqref{purebosonicGTR}.
The gauge invariance for totally symmetric fields was analyzed in \cite{Vasiliev:2001wa,Alkalaev:2002rq}.

A general variation of the action for "hook" fields \eqref{B2} is given by
\be
{\delta \cal B}_2=  \int \hat{H}_{e_2}\; \delta R_{e_{22}}\, R_{e_{21}}
+
\int \hat{H}_{e_2}\; R_{e_{22}}\,\delta R_{e_{21}}\,.
\ee
Substituting $\delta R_{e_{21}} = [R_{e_{21}}, \xi_{e_1}]_\star$ and
$\delta R_{e_{22}} = [R_{e_{22}}, \xi_{e_1}]_\star$ from \eqref{purebosonicGTR}
we obtain

\be
\ba{c}
\dps
{\delta \cal B}_2= \int \hat{H}_{e_2}\; (R_{e_{22}}\star\xi_{e_1})\, R_{e_{21}}
-
 \int \hat{H}_{e_2}\; (\xi_{e_1}\star R_{e_{22}})\, R_{e_{21}}
\\
\\
\dps
 +
\int \hat{H}_{e_2}\; R_{e_{22}}\, (R_{e_{21}}\star\xi_{e_1})
-
\int \hat{H}_{e_2}\; R_{e_{22}}\, (\xi_{e_1}\star R_{e_{21}})\,.
\\

\ea
\ee
In order to calculate the above variation in the form \eqref{matt} we set all
traces in $R_{e_{21}}$ and $R_{e_{22}}$ to zero and for respective
traceless components use the following representation
in terms of Weyl tensors, cf. \eqref{onmassH21} and \eqref{onmassH22},
\be
\ba{l}
\dps
\cR_{e_{21}}(a,b) = {\rm Res}_{\nu}\nu^{-2} e^{\nu^{-1}a_\alpha\frac{\d}{\d c_\alpha}+\nu b^\beta\frac{\d}{\d c^\beta}}
H_2^{\gamma\rho}\frac{\d^2}{\d c^\gamma\d c^\rho} C_{e_{21}}(c)\Big|_{c=0}\;,
\\
\\
\dps
\cR_{e_{22}}(a,b) = {\rm Res}_{\nu}\nu^{-2} e^{\nu a_\alpha\frac{\d}{\d c_\alpha}+\nu^{-1} b^\beta\frac{\d}{\d c^\beta}}
H_2^{\gamma\rho}\frac{\d^2}{\d c^\gamma\d c^\rho} C_{e_{22}}(c)\Big|_{c=0}\;.
\\
\ea
\ee
We find that up to non-zero multiplicative constant  variation $\delta \cB_2$ is given by
\be
\label{puuk}
\ba{c}
\dps
\int H_5\, \bar {k}^2 {\rm Res}_\nu \, e^{\half(\nu \bar{v}_1 - \nu^{-1}\bar{u}_1)}\nu^{-2} (\nu \bar{k}+\bar{u}_2)^2 \Phi(Z)\;
C_{e_{22}}(c_1)C_{e_{21}}(c_2)\xi(a_3,b_3)
\\
\\
\dps
-\int H_5\, \bar {k}^2 {\rm Res}_\nu \, e^{\half(-\nu \bar{v}_1 + \nu^{-1}\bar{u}_1)}\nu^{-2} (\nu \bar{k}+\bar{u}_2)^2 \Phi(Z)\;
C_{e_{22}}(c_1)C_{e_{21}}(c_2)\xi(a_3,b_3)+
\ea
\ee
\be
\nonumber
\ba{l}
\\
\\
\dps
+
\int H_5\, \bar {k}^2 {\rm Res}_\nu \, e^{\half(\nu^{-1} \bar{v}_2 - \nu \bar{u}_2)}\nu^{-2} (\nu \bar{k}+\bar{v}_1)^2
\Phi(Y)\; C_{e_{22}}(c_1)C_{e_{21}}(c_2)\xi(a_3,b_3)
\\
\\
\dps
-\int H_5\, \bar {k}^2 Res_\nu \, e^{\half(-\nu^{-1} \bar{v}_2 + \nu \bar{u}_2)}\nu^{-2} (\nu \bar{k}+\bar{v}_1)^2
\Phi(Y)\; C_{e_{22}}(c_1)C_{e_{21}}(c_2)\xi(a_3,b_3)\;,
\ea
\ee
where we used the following notation
\be
\bar{k}=\frac{\d^2}{\d c_{1\,\alpha} \d c_2^\alpha}\,,
\qquad
\bar{u}_i=\frac{\d^2}{\d c_i^\alpha \d a_{3\,\alpha}}\,,
\qquad \bar{v}_i=\frac{\d^2}{\d c_{i\,\alpha} \d b_3^\alpha}\,,
\ee
and
\be
\ba{c}
\dps
Z \equiv AB =   (\nu \bar{k}+\bar{u}_2)(\nu^{-1} \bar{k}-\bar{v}_2)\;,
\qquad
Y \equiv FD =  (\nu\bar{k} + \bar{v}_1)(\nu^{-1}\bar{k} -\bar{u}_1)\;,
\ea
\ee
while the function $\Phi(Z)$ is given by
\be
\label{ExampleOfPhi}
\Phi(Z) = Z(\alpha_{e_2}(Z,-Z) - \zeta_{e_2}(Z,-Z))\;.
\ee
Quantity $H_5$ is a $5$-form defined as $H_5 = h_{\alpha}{}^\beta h_{\beta}{}^\gamma h_{\gamma}{}^\rho h_{\rho}{}^\delta h_{\delta}{}^\alpha$ \cite{Vasiliev:2001wa}.
The invariance condition \eqref{matt} requires the above variation to vanish.
Because it is legitimate to omit generalized Weyl tensors and $H_5 \bar k^2$ in the left-hand-side
of \eqref{puuk} we obtain the following equation
\be
\ba{c}
{\rm Res}_\nu \, e^{\half(\nu \bar{v}_1 - \nu^{-1}\bar{u}_1)}\nu^{-2} A^2 \Phi(AB)\;
-{\rm Res}_\nu \, e^{\half(-\nu \bar{v}_1 + \nu^{-1}\bar{u}_1)}\nu^{-2} A^2 \Phi(AB)\;
\\
\\
\dps
\;\;\;\;\;\;\;+
{\rm Res}_\nu \, e^{\half(\nu^{-1} \bar{v}_2 - \nu \bar{u}_2)}\nu^{-2} F^2
\Phi(FD)\;
-{\rm Res}_\nu \, e^{\half(-\nu^{-1} \bar{v}_2 + \nu \bar{u}_2)}\nu^{-2} F^2
\Phi(FD)\; = 0\;.
\\

\ea
\ee
Let us define a function $\tilde \Phi(A,B) = A^2 \Phi(AB)$ and rewrite the above equation as follows
\be
\label{526}
\ba{c}
{\rm Res}_\nu \nu^{-2}\Big(\, e^{\half(\nu \bar{v}_1 - \nu^{-1}\bar{u}_1)} \tilde \Phi(A,B)\;
-e^{\half(-\nu \bar{v}_1 + \nu^{-1}\bar{u}_1)}  \tilde \Phi(A,B)\;
+
\\
\\
\dps
e^{\half(\nu^{-1} \bar{v}_2 - \nu \bar{u}_2)} \tilde
\Phi(F,D)\;
-e^{\half(-\nu^{-1} \bar{v}_2 + \nu \bar{u}_2)} \tilde
\Phi(F,D)\;\Big) = 0\;.
\\

\ea
\ee
An educated guess is that the function $\tilde \Phi (A,B) = \Phi^{e_2}_0 Res_\mu (\mu^{-2}e^{\half(\mu A + \mu^{-1}B)})$, where
$\Phi^{e_2}_0$ is an arbitrary constant, is a solution to
the above equation. Indeed, substituting this function back into \eqref{526} gives
\be
\ba{c}
{\rm Res}_\nu \nu^{-2}\mu^{-2}\Big(\, e^{\half(\nu \bar{v}_1 - \nu^{-1}\bar{u}_1) + \half(\mu A + \mu^{-1}B)}\;
-e^{\half(-\nu \bar{v}_1 + \nu^{-1}\bar{u}_1)+ \half(\mu A + \mu^{-1}B)}\;
+
\\
\\
\dps
e^{\half(\nu^{-1} \bar{v}_2 - \nu \bar{u}_2)+ \half(\mu F + \mu^{-1}D)}
-e^{\half(-\nu^{-1} \bar{v}_2 + \nu \bar{u}_2)+ \half(\mu F + \mu^{-1}D)} \;\Big) = 0\;,
\\

\ea
\ee
or
\be
\ba{c}
{\rm Res}_\nu \nu^{-2}\mu^{-2}\Big(\, e^{\half(\nu \bar{v}_1 - \nu^{-1}\bar{u}_1) + \half(\mu (\nu \bar{k}+\bar{u}_2) + \mu^{-1}(\nu^{-1} \bar{k}-\bar{v}_2))}\;
-e^{\half(-\nu \bar{v}_1 + \nu^{-1}\bar{u}_1)+ \half(\mu (\nu \bar{k}+\bar{u}_2) + \mu^{-1}(\nu^{-1} \bar{k}-\bar{v}_2))}\;
+
\\
\\
\dps
e^{\half(\nu^{-1} \bar{v}_2 - \nu \bar{u}_2)+ \half(\mu (\nu\bar{k} + \bar{v}_1) + \mu^{-1}(\nu^{-1}\bar{k} -\bar{u}_1))}
-e^{\half(-\nu^{-1} \bar{v}_2 + \nu \bar{u}_2)+ \half(\mu (\nu\bar{k} + \bar{v}_1) + \mu^{-1}(\nu^{-1}\bar{k} -\bar{u}_1))} \;\Big) = 0\;.
\\

\ea
\ee
The first and the forth terms are equal to  each other under
$\nu \leftrightarrow\mu$, while the second and the third terms are
equal to each other under $\nu \leftrightarrow - \mu $.
Therefore, we conclude that the function
\be
\label{hookPhi}
\Phi(A) = \Phi_0^{e_2} A^{-2} {\rm Res}_{\mu} \Big(\mu^{-2} \exp{\half(\mu A +\mu^{-1})}\Big)\;,
\ee
where $A$ is some indeterminate variable,
solves the invariance condition in the sector of "hook" fields. As a result, we arrive at the following equation
on the coefficient functions
\be
\label{mastereq}
A(\alpha(A, -A) -\zeta(A,-A))= \Phi_0 A^{-2} {\rm Res}_{\mu} \Big(\mu^{-2} \exp{\half(\mu A +\mu^{-1})}\Big)\;.
\ee
The left-hand-side of the above equation does not vanish
at $A=0$ because the coefficient $\zeta(A,-A)$ is not necessarily polynomial and contains
poles in $A$. Contrary, the right-hand-side is polynomial but the zeroth order in $A$ is not
generally zero so the equation is consistent at $A=0$.
Let us note that though the above equation involves the coefficients which are functions
of two variables $p$ and $q$ it defines dependence on just one variable. Actually
this is due to the fact that equation \eqref{mastereq} involves a function of a single variable
$\rho(p+q)$ which defines normalization  constants  in front of quadratic actions \eqref{K1}-\eqref{K3}.

Equation \eqref{mastereq} can be cast into the following convenient integral form
\be
\alpha(A, -A) -\zeta(A,-A) = \frac{\Phi_0^{e_2}}{2} A^{-2}\int_0^1 d \tau {\rm Res}_\nu \nu^{-1}e^{\half(\nu^{-1}+\nu \tau A)}\;.
\ee
We write down the answer in terms of function
\be
\rho(p) = -\frac{\Phi_0}{2p}\, \int_0^1 d \tau {\rm Res}_\nu \nu^{-1}e^{\half(-\nu^{-1}+\nu \tau p)}\;.
\ee
It follows that the coefficient functions take the the form
\be
\zeta(p,q) = \frac{\rho(p+q)}{q}\;,
\ee
\be
\alpha(p,q) =\frac{\rho(p+q)}{q}+\frac{\Phi_0}{2q}\, \int_0^1 d \tau {\rm Res}_\nu \nu^{-1}e^{\half(-\nu^{-1}+\nu (\tau p+q))}\;.
\ee
One can explicitly check  that the above formal series satisfy the following identities
\be
\Big(p\frac{\d^2}{\d p^2}+3\frac{\d}{\d p}+\frac{1}{4}\Big)\rho(p) = 0\;,
\ee
\be
\Big((2+p\frac{\d}{\d p})\frac{\d}{\d p}+
(3+q\frac{\d}{\d q})\frac{\d}{\d q}+\frac{1}{4}\Big)\alpha(p,q) = 0\;,
\ee
which are in fact conditions \eqref{tildecoef}, \eqref{newold}. Thus it is shown that the coefficient
functions  for "hook" fields satisfy the factorization condition,
the $C$-invariance condition, extra field decoupling
condition and the invariance condition (\ref{matt}). One concludes that the action for "hook" fields
is consistently defined both on the free field and interaction levels.

\subsubsection{The remaining invariance}
\label{sec:remaining_invariance}

Gauge invariance of actions for totally symmetric bosonic and fermionic fields  with respect to $\xi_{e_1}(a,b) $
has been considered in \cite{Vasiliev:2001wa, Alkalaev:2002rq}.
The common feature
of the variation in different field sectors of the full action
is that coefficient functions $\alpha(p,q)$, $\beta(p,q)$, $\gamma(p,q)$, and $\zeta(p,q)$
in \eqref{H} appear  only through particular combinations identified with functions $\Phi(X)$
of the type \eqref{ExampleOfPhi}; exact expressions  are collected
in \eqref{fron}-\eqref{hook}.
It follows  that considering the gauge  variation is more
convenient in terms of functions $\Phi(X)$.
Taking into account the results obtained in the previous section
we list  functions $\Phi(X)$ for spin-$s_1$ fields and for spin-$(s_1,1)$ in the following manner

\be
\label{solution}
\Phi(X) = \Phi_0 \Psi(X)\;,
\qquad
\Psi(X) = X^{-2s_2} {\rm Res}_{\nu} \big(\nu^{-2s_2} \exp{\half(\nu^{-1}+\nu X)}\big)\;,
\ee
where $X$ is an indeterminate variable, normalization constants $\Phi_0$ are
arbitrary, and $s_2 = 0, \half, 1$.
This result tells us that gauge invariance with parameter $\xi_{e_1}$ fixes all coefficients inside
actions for each type of supermultiplet fields and leaves arbitrary overall constants. The remaining
gauge invariance imposes on them some linear relations so that all these constants
are expressed via a single normalization constant.

Prior discussing the remaining gauge invariance let us make the
following observation. By virtue of the $C$-invariance condition
the invariance with respect to $\xi_{e_1}(a,b)$ yields the invariance with respect to
bosonic parameters $\xi_{e_{31}}(a,b)$ and $\xi_{e_{4}}(a,b)$.
Indeed, suppose we proved invariance of the action with respect to
$\xi_{e_1}$, \textit{i.e.} the condition \eqref{condinvariance} is
satisfied, $\cA (R, [R, \xi_{e_1}]_\star)\approx 0$. It follows that
the same is true also for another element $R^\prime = N\star R=R\star N$ of
gauge $cu(2,2|\,8)$ superalgebra, \textit{i.e.} $\cA (N\star R, [N\star R,
\xi_{e_1}]_\star)\approx 0$. Since $N$ is central element of
$cu(2,2|\,8)$ and by virtue of the $C$-invariance condition one
obtains $\cA (R, [R, N\star N\star\xi_{e_1}])\approx 0$ for some new gauge
parameter $\zeta = N\star N\star\xi_{e_1}$. In fact, parameter $\zeta$ is a
combination of $\zeta_{e_1}$, $\zeta_{e_{31}}$, and $\zeta_{e_4}$,
expressed via $T^+$ and $T^-$ acting on original $\xi_{e_1}$. The
invariance with respect to $\xi_{e_1}$, $\xi_{e_{31}}$, and $\xi_{e_4}$ can also
be checked by direct calculation: varying with respect to
$\xi_{e_{31}}$ and $\xi_{e_{4}}$ gives the same relation between the respective
normalization constants  $\Phi_0$ as guaranteed by the $C$-invariance
condition \eqref{Cinvnumbers} and gives equations on  coefficient functions
equivalent to those that follow from the
variation with respect to $\xi_{e_{1}}$.

Analogous reasoning is also applied to the gauge transformations with
fermionic parameters $\xi_{o_1}$ and $\xi_{o_2}$ and it follows that
gauge invariance  $\xi_{o_2}$ is guaranteed by gauge invariance $\xi_{o_1}$
and the $C$-invariance condition. As a result, we obtain that it is sufficient to
check gauge invariance for three bosonic parameters $\xi_{e_1}$, $\xi_{e_{21}}$, $\xi_{e_{32}}{}_i{}^j$ and for
one fermionic parameter  $\xi_{o_{11}}{}_i$. Expression for these gauge transformations
are given in Appendix \bref{sec:A}. The invariance associated with other gauge parameters is guaranteed
through the $C$-invariance condition. In fact, imposing the gauge invariance with respect the above parameters
leaves just four independent constants $\Phi_0$ \eqref{solution} in front of actions $\cB_{1}$, $\cB_{2}$, $\cB_{32}$, and $\cF_1$. They will
be respectively denoted as $\Phi_0^{e_1}$, $\Phi_0^{e_2}$, $\Phi_0^{e_{32}}$, and $\Phi_0^{o_1}$.

Now we discuss the gauge invariance and linear relations on four normalization constants imposed
by each type of gauge symmetry. In order to find these relations
one needs to use the following identities between
functions $\Psi_0(X)$, $\Psi_\half(X)$, $\Psi_1(X)$ and their derivatives
with different values of a second spin
\be
\label{IdentitiesPhi}
\ba{c}
\dps
X\frac{\d \Psi_\half(X)}{\d X}+\Psi_\half(X)=\frac{1}{2}\,\Psi_0(X)\,,
\qquad
\frac{\d \Psi_{0} (X)}{\d X} = \frac{1}{2}\Psi_\half (X)\;,
\\
\\
\dps
X\frac{\d \Psi_{1}(X)}{\d X}  + 2 \Psi_{1}(X)  = \frac{1}{2}\Psi_{\half} (X)\;,
\qquad
\frac{\d \Psi_{\half} (X)}{\d X} = \frac{1}{2}\Psi_1 (X)\;.
\ea
\ee


Let us shortly discuss each of four types of gauge symmetry. Firstly, consider  gauge symmetry  with
parameter $\xi_{e_{21}} = \xi_{e_{21}}(a,b)$ and its conjugated cousin. Because  this
symmetry is bosonic it follows that fermionic and bosonic sectors of the full action \eqref{act34}
transform independently. In the fermionic sector the gauge symmetry
mixes up fields $\Omega_{o_1}$ and $\Omega_{o_2}$ \eqref{A5}, \eqref{A6} and by direct calculation one obtains
that fermionic sector is invariant provided
normalization constants are related as $\cF_1 = 4\cF_2$, cf. \eqref{Cinvnumbers}. In the bosonic sector
the gauge symmetry mixes up four fields $\Omega_{e_1}$, $\Omega_{e_{2}}$, $\Omega_{e_{31}}$,
and $\Omega_{e_4}$  \eqref{A7}. Calculating the respective action's variation,
using  identities
\eqref{IdentitiesPhi} and the $C$-invariance condition one obtains
\be
\Phi_0^{e_2} = 2\Phi_0^{e_1}\;,
\ee
while $\cB_{{31}} = \half \cB_{1}$ and $\cB_{{4}} = \frac{1}{4}\cB_{1}$. It follows that normalization
constants in this sector of fields are totally fixed in terms of $\Phi_0^{e_1}$.

Quite analogously we consider gauge symmetry with $su(2)$ matrix-valued parameter
$\xi_{e_{32}}{}_j^i = \xi_{e_{32}}{}_j^i(a,b)$. Since this
symmetry is bosonic it follows that fermionic and bosonic sectors of the full action \eqref{act34}
transform independently. In the fermionic sector the gauge symmetry
mixes up fields $\Omega_{o_1}$ and $\Omega_{o_2}$ \eqref{A8} and by direct calculation one obtains
that fermionic sector is invariant provided
normalization constants are related as $\cF_1 = 4\cF_2$, cf. \eqref{Cinvnumbers}.
In the bosonic sector
the gauge symmetry mixes up four fields $\Omega_{e_1}$, $\Omega_{e_{32}}{}_i{}^j$,
and $\Omega_{e_4}$  \eqref{A9}. Calculating the respective action's variation, using identities
\eqref{IdentitiesPhi} and the $C$-invariance condition one obtains
\be
\Phi_0^{e_{32}} = \frac{1}{4}\Phi_0^{e_1}\;,
\ee
while $\cB_{{4}} = \frac{1}{4}\cB_{1}$. It follows that normalization
constants in this sector of fields are completely  fixed in terms of $\Phi_0^{e_1}$. It also implies
that all bosonic coefficients are fixed uniquely and the overall normalization constant is $\Phi_0^{e_1}$.

Finally, we analyze fermionic $su(2)$ vector-valued parameter $\xi_{o_{11}}^i = \xi_{o_{11}}^i(a,b) $ and
its conjugated one. The respective gauge transformation is supersymmetric and mixes up all bosonic fields
and all fermionic fields, see \eqref{A10}. Calculating the respective action's
variation, using  identities
\eqref{IdentitiesPhi} and the $C$-invariance condition one obtains
\be
\Phi_0^{o_1} = - \Phi_0^{e_1}\;,
\ee
$\cB_{{31}} = \half \cB_{1}$, $\cB_{{4}} = \frac{1}{4}\cB_{1}$, and $\cF_1 = 4\cF_2$, cf. \eqref{Cinvnumbers}.
It follows that all normalization constants are fixed uniquely and expressed in terms of $\Phi_0^{e_1}$
to be denoted as
\be
\Phi_0 \equiv  \Phi_0^{e_1}\;.
\ee
The final expressions for coefficient functions are collected in Section \bref{sec:summary}.

\section{Conclusion}
\label{sec:conclusion}

In this paper we  built and analyzed FV-type  formulation of
$\ads$ totally symmetric and mixed-symmetry massless fields
interacting between themselves and with the gravity. Our
consideration is performed in the cubic order approximation. We
considered two models with gauge symmetry corresponding to
reduced and unreduced  $\cN=2$ Fradkin-Linetsky higher spin
superalgebras, $cu(2,2|8)$ and $hu_0(2,2|8)$. We have built the
projecting operator that explicitly factorizes unreduced
superalgebra $cu(2,2|8)$ to obtain reduced  superalgebra
$hu_0(2,2|8)$. Moreover, we have found projecting operators for
any $\cN$.

It is worth noting that constructing the interaction vertices brings to light
very powerful algebraic tools like Howe dual pairs of classical Lie
(super)algebras realized on a superspace of auxiliary variables.
One of the most important implications of Howe duality  is the $gl(1)$ invariance condition
referred to as the $C$-invariance condition for the action functional
\eqref{Cinv}. This condition is the direct  analog of the $sp(2)$
invariance for  Vasiliev equations for totally symmetric fields
\cite{Vasiliev:2003ev}. Indeed,  $N$ is  the basis element of
$gl(1)$   considered as Howe dual algebra to
$su(2,2\,|\, 2)$ superalgebra  in the star product realization.
Then the condition $[N, F]_\star = 0$ \eqref{dpoc} tells us that
fields are $gl(1)$ invariants and this invariance should be
retained on the action level via the $C$-invariance condition.

Let us now discuss some future research directions. First of all,
it would be worth pursuing our analysis to $\cN>2$ thereby
including mixed-symmetry fields of any value of the second spin
$s_2$ and not only "hook" fields with $s_2=1$. Further progress
depends on establishing for spins $s_2>1$ the proposition
analogous to those of Section \bref{sec:weyl}. Namely, it is
necessary to formulate a proper set of constraints for unfolded
fields such that one obtains correct on-shell dynamics. We hope to
return to this problem elsewhere.

Much more important and difficult  task however is to construct
nonlinear equations of motion for mixed-symmetry fields in all
orders thereby extending Vasiliev equations for totally symmetric
fields \cite{Vasiliev:2003ev}. Contrary to the on-shell theory one
may consider also the so-called off-shell formulation of higher
spin dynamics that introduces higher spin fields and their
non-linear gauge symmetries without imposing any field equations.
It will be interesting to develop the off-shell nonlinear
formulation for mixed-symmetry fields both in Minkowski and AdS
spacetimes as it has been done in the case of totally symmetric
fields \cite{Sagnotti:2005ns,Vasiliev:2005zu,Grigoriev:2006tt}.

It would be useful  to extend results of the present paper to
higher dimensions $d>5$ and consider a FV-type theory based on the
higher spin algebra $hu(1|(1, 2):[M, 2])$ from
\cite{Vasiliev:2004cm}. Gauging this algebra yields generalized
"hook" massless fields in $AdS_d$ spacetime, which are fields with
one row of any length  and one column of any height (in fact, the
height is bounded from below by a dimension $d$).

\vspace{5mm}

\textbf{Acknowledgments.} I am grateful to Maxim Grigoriev, Ruslan
Metsaev, and  Eugene  Skvortsov for  many fruitful discussions.
Also I would like to thank Harald Dorn for his  hospitality during
my visit to Humboldt university, Berlin, where this work has been
finished. This work is supported by RFBR grant Nr 08-02-00963 and
the Alexander von Humboldt Foundation grant PHYS0167.


\subsection*{\textbf{\Large Appendix}}
\refstepcounter{section}
\label{sec:A}
\addcontentsline{toc}{subsection}{Appendix}
\def\theequation{A.\arabic{equation}}
\setcounter{equation}{0}

\paragraph{Coefficient functions.}

Spin-$(s,0)$ case, see \cite{Vasiliev:2001wa}:
\be
\label{fron}
\Phi(X) =  - X (\alpha(X,-X) - 2\gamma(X,-X))\;.
\ee
Spin-$(s,\half)$ case, see \cite{Alkalaev:2002rq}:
\be
\label{fronfng}
\Phi(X) = X(\alpha(X,-X) + \zeta(X,-X))\;.
\ee
Spin-$(s,1)$ case, see \eqref{ExampleOfPhi}:
\be
\label{hook}
\Phi(X) = X(\alpha(X,-X) - \zeta(X,-X))\;.
\ee


In what follows we list  explicit expressions for gauge transformations.
We use commutators $[F,G]_\star  = F\star G-G\star F$ and anticommutators
$\{F,G\}_\star =F\star G +G\star F$.

\paragraph{The gauge symmetry with parameter $\xi_{e_1}(a,b)$.}

\be
\label{purebosonicGTR}
\ba{c}
\dps
\delta R_{e_{1}} = [R_{e_1},\xi_{e_1}]_\star\;,
\qquad
\delta R_{e_{4}} = [R_{e_4},\xi_{e_1}]_\star\;,
\\
\\
\delta R_{e_{21}} = [R_{o_{21}},\xi_{e_1}]_\star \;,
\qquad
\delta R_{e_{22}} = [R_{o_{22}},\xi_{e_1}]_\star \;,
\\
\\
\delta R_{e_{31}} = [R_{e_{31}},\xi_{e_1}]_\star \;,
\qquad
\delta R_{e_{32}}{}^i{}_j = [R_{e_{32}}{}^i{}_j,\xi_{e_1}]_\star \;,

\\
\\
\delta R_{o_{11}}^i = [R_{o_{11}}^i,\xi_{e_1}]_\star \;,
\qquad
\delta R_{o_{12}}{}_i = [R_{o_{12}}{}_i,\xi_{e_1}]_\star \;,
\\
\\
\delta R_{o_{21}}^i = [R_{o_{21}}^i,\xi_{e_1}]_\star \;,
\qquad
\delta R_{o_{22}}{}_i = [R_{o_{22}}{}_i,\xi_{e_1}]_\star \;,
\ea
\ee

\paragraph{The gauge symmetry with parameter $\xi_{e_{21}}(a,b)$ in the fermionic sector.}

\be
\label{A5}
\ba{c}
\dps
\delta R_{o_{11}}^i = -\epsilon^{ij}\{R_{o_{12}{}_j}, \xi_{e_{21}}\}_\star
-\half \epsilon^{ij}[R_{o_{22}{}_j}, \xi_{e_{21}}]_\star\;,
\qquad
\delta R_{o_{12}}{}_i = 0 \;,
\ea
\ee
and
\be
\label{A6}
\ba{c}
\dps
\delta R_{o_{21}}^i = 2 \epsilon^{ij}[R_{o_{12}{}_j}, \xi_{e_{21}}]_\star
+ \epsilon^{ij}\{R_{o_{22}{}_j}, \xi_{e_{21}}\}_\star\;,
\qquad
\delta R_{o_{22}}{}_i =0\;.
\ea
\ee
The analogous transformations hold for the conjugated gauge parameter $\xi_{e_{22}}$.

\paragraph{The gauge symmetry with parameter  $\xi_{e_{21}}(a,b)$ in the bosonic sector.}

\be
\label{A7}
\ba{c}
\delta R_{e_1} = -[R_{e_{22}}, \xi_{e_{21}}]_\star\;,
\\
\\
\dps
\delta R_{e_{21}} = [R_{e_{1}}, \xi_{e_{21}}]_\star+\{R_{e_{31}}, \xi_{e_{21}} \}_\star
+\half [R_{e_4},\xi_{e_{21}}]_\star\;,
\\
\\
\dps
\delta R_{e_{22}} = 0\;,
\qquad
\delta R_{e_{32}}{}^i{}_j = 0\;,
\\
\\
\dps
\delta R_{e_{31}} = 2 \{R_{e_{22}}, \xi_{e_{21}}\}_\star\;,
\qquad
\delta R_{e_4} = -2 [R_{e_{22}},\xi_{e_{21}}]_\star\;.
\ea
\ee

\paragraph{The gauge symmetry with parameter $\xi_{e_{32}}(a,b)$ in the fermionic sector.}
The symmetry associated with parameter
$\xi_{e_{32}}(a,b, \psi) = \xi_{e_{32}}{}^i_j(a,b)(\psi_i\bpsi^j)$, where
we assume that all $su(2)$ traces are zero, has the following form
\be
\label{A8}
\ba{c}
\dps
\delta R_{o_{11}}^ i = - \frac{1}{2}\{ R^j_{o_{11}}, \xi_j{}^i\}_\star -\frac{1}{4}[R_{o_{21}}^j, \xi_j{}^i]_\star \;,
\\
\\
\dps
\delta R_{o_{12}}{}_i = \frac{1}{2}\{ R_{o_{12}}{}_j, \xi^j{}_i\}_\star - \frac{1}{4}[ R_{o_{22}}{}_j, \xi^j{}_i]_\star\;,
\\
\\
\dps
\delta R_{o_{21}}^i = -[R_{o_{11}}^j, \xi_j{}^i]_\star - \frac{1}{2} \{ R^j_{o_{21}}, \xi_j{}^i\}_\star  \;,
\\
\\
\dps
\delta R_{o_{22}}{}_i = -[R_{o_{12}}{}_j, \xi^j{}_i]_\star +\frac{1}{2}\{ R_{o_{22}}{}_j, \xi^j{}_i\}_\star\;.
\ea
\ee

\paragraph{The gauge symmetry for $\xi_{e_{32}}(a,b)$ in the bosonic sector.}
The symmetry associated with parameter
$\xi_{e_{32}}(a,b, \psi) = \xi_{e_{32}}{}^i_j(a,b)(\psi_i\bpsi^j)$, where
we assume that all $su(2)$ traces are zero, has the following form
\be
\label{A9}
\ba{l}
\dps
\delta R_{e_1} = \frac{1}{4}[R_{e_{32}}{}^m{}_n,  \xi_{e_{32}}{}^n{}_m]_\star \;,
\qquad
\delta R_{e_{21}}= 0\;,
\qquad
\delta R_{e_{31}}  = 0\;,
\\
\\
\dps
\delta R_{e_{32}}{}^i{}_j = [R_{e_1}, \xi_{e_{32}}{}^i_j]_\star
-\frac{1}{2}[R_{e_4},\xi_{e_{32}}{}^i{}_j ]_\star +
\\
\\
\dps
+\frac{1}{2}\big(\{R_{e_{32}}{}^i{}_n,  \xi_{e_{32}}{}^n{}_j\}_\star
- \frac{1}{2} \delta^i_j\,\{R_{e_{32}}{}^m{}_n,  \xi_{e_{32}}{}^n{}_m\}_\star\big)-
\\
\\
\dps
-\frac{1}{2}\big(\{R_{e_{32}}{}^m{}_j,  \xi_{e_{32}}{}^i{}_m\}_\star
- \frac{1}{2} \delta^i_j\,\{R_{e_{32}}{}^m{}_n,  \xi_{e_{32}}{}^n{}_m\}_\star\big)\;,
\\
\\
\dps
\delta R_{e_4} = -\frac{1}{2}[R_{e_{32}}{}^m{}_n, \xi_{e_{32}}{}^n{}_m]_\star\;.
\ea
\ee

\paragraph{Supersymmetry transformations.}
Let us choose supersymmetric  parameter in the form $\xi_{o_{12}} = \xi_i(a,b)\bpsi^i$.
\be
\nonumber
\ba{l}
\dps
\delta R_{e_1} = \frac{1}{2}[R_{o_{11}}^i, \xi_i]_\star\;,
\qquad
\delta R_{e_4}  = \frac{1}{2}\{R_{o_{21}}^i, \xi_i\}_\star\;,
\\
\\
\dps
\delta R_{o_{11}}^i = \epsilon^{ij}\{R_{e_{21}}, \xi_j\}_\star\;,
\qquad
\delta R_{o_{12}}{}_i  =  [R_{e_1}, \xi_i]_\star- \frac{1}{2}\{R_{e_{32}}{}^m{}_i, \xi_m\}_\star
- \frac{1}{2}\{R_{e_{31}}, \xi_i\}_\star\;,
\\
\\
\dps
\delta R_{e_{21}} = 0\;,
\qquad
\delta R_{e_{22}}  = \frac{1}{2}\epsilon^{ij}\{R_{o_{12}}{}_i, \xi_j\}_\star
-\frac{1}{4}\epsilon^{ij} [R_{o_{22}}{}_i, \xi_j]_\star \;,

\ea
\ee
\be
\label{A10}
\ba{l}
\dps
\delta R_{e_{31}} = \frac{1}{4}[R_{o_{21}}^i, \xi_i]_\star
+\frac{1}{2}\{R_{o_{11}}^m, \xi_m\}_\star\;,

\\
\\
\dps
\delta R_{e_{32}}{}^i{}_j  = \{R_{o_{11}}^i, \xi_j\}_\star
-\frac{1}{2}\big([R_{o_{21}}^i, \xi_j]_\star - \frac{1}{2}\delta^i_j\,[R_{o_{21}}^m, \xi_m]_\star\big)\;,
\\
\\
\dps
\delta R_{o_{21}}^i  = 2\epsilon^{ij} [R_{e_{21}}, \xi_j]_\star
\;,
\qquad
\delta R_{o_{22}}{}_i  =
[R_{e_{31}}, \chi_i]_\star -[R_{e_{32}}{}^m{}_i, \xi_m]_\star - [R_{e_4}, \xi_i]_\star \;.

\ea
\ee

\vspace{5mm}



\providecommand{\href}[2]{#2}\begingroup\raggedright\endgroup

\end{document}